\newcommand{\onlyShort}[1]{\ifthenelse{\boolean{short}}{#1}{}}
\newcommand{\onlyLong}[1]{\ifthenelse{\boolean{short}}{}{#1}}
\def\BState{\State\hskip-\ALG@thistlm}
\newcommand{\defeq}{\stackrel{\mathsmaller{\mathsf{def}}}{=}}
\providecommand{\SizeOfSet}[1]{\lvert#1\rvert}
\providecommand{\abs}[1]{\lvert#1\rvert}
\newtheorem{theorem}{Theorem}
\newtheorem{lemma}[theorem]{Lemma}
\newtheorem{corollary}[theorem]{Corollary}
\newtheorem{definition}{Definition}
\newtheorem{remark}[definition]{Remark}
\DeclareMathOperator{\E}{E}
\DeclareMathOperator{\Vol}{Vol}
\DeclareMathOperator{\polylog}{polylog}
\begin{document}

\title{Distributed MST: A Smoothed Analysis}


\author{
Soumyottam Chatterjee \thanks{Department of Computer Science, University of Houston, Houston, TX 77204, USA. Email: {\tt schatterjee4@uh.edu}.}
 \and Gopal Pandurangan \thanks{Department of Computer Science, University of Houston, Houston, TX 77204, USA. Email: {\tt gopalpandurangan@gmail.com}. Research supported, in part, by NSF grants CCF-1527867, CCF-1540512,  IIS-1633720,  CCF-BSF-1717075, and US-Israel BSF award 2016419.} 
\and
Nguyen Dinh Pham \thanks{Department of Computer Science, University of Houston, Houston, TX 77204, USA. Email: {\tt aphamdn@gmail.com}.}
}





\maketitle

\begin{abstract}
    We study smoothed analysis of distributed graph algorithms, focusing on the fundamental minimum spanning tree (MST) problem. With the goal of studying the time complexity of distributed MST as a function of the ``perturbation'' of the input graph, we posit a {\em smoothing model} that is parameterized by a smoothing parameter $0 \leq \epsilon(n) \leq 1$ which  controls the amount of {\em random} edges that can be added to an input  graph $G$ per round. Informally, $\epsilon(n)$ is the probability (typically a small function of $n$, e.g., $n^{-\frac{1}{4}}$) that a random edge can be added to a node per round.
The added random edges, once they are added, can be used (only) for communication.

We show  upper and lower bounds on the time complexity of distributed MST in the above smoothing model. We present a distributed algorithm that, with high probability,\footnote{Throughout, with high probability (whp) means with probability at least $1 - n^{-c}$, for some fixed, positive constant $c$.} computes an MST and  runs in $\tilde{O}(\min\{\frac{1}{\sqrt{\epsilon(n)}} 2^{O(\sqrt{\log n})}, D + \sqrt{n}\})$ rounds\footnote{The notation $\tilde{O}$ hides a  $\polylog(n)$ factor and $\tilde{\Omega}$ hides a $\frac{1}{\polylog{(n)}}$ factor, where $n$ is the number  of nodes of the graph.} where $\epsilon$ is the smoothing parameter, $D$ is the network diameter and $n$ is the network size.
To complement our upper bound, we also show a lower bound of $\tilde{\Omega}(\min\{\frac{1}{\sqrt{\epsilon(n)}}, D+\sqrt{n}\})$. We note that the upper and lower bounds essentially match except for a multiplicative $2^{O(\sqrt{\log n})} \polylog(n)$ factor.

Our work can be considered as a first step in understanding the smoothed complexity of distributed graph algorithms.
\end{abstract}





\section{Introduction and Motivation} \label{section-introduction}

Smoothed analysis of algorithms was introduced in a seminal paper by Speilman and Teng \cite{Spielman_2004} to explain why the well-studied simplex algorithm for linear programming does well in practice, despite having an (worst-case) exponential run time in theory. The high-level idea behind the smoothed analysis of the simplex algorithm is the following:
\begin{enumerate}
    \item Perturbing the input data with a \emph{small} amount of \emph{random} noise (e.g., Gaussian noise with mean zero, parameterized by the variance of the noise), and then
    
    \item Showing that the perturbed input  can be solved efficiently by the simplex algorithm, i.e., in polynomial time. In particular, Spielman and Teng quantify the run time as a function of the perturbation; the more the perturbation (i.e., larger the variance of the noise), the faster the run time.
\end{enumerate}

Smoothed analysis is thus different from the {\em worst-case} analysis of algorithms. It is also different from the {\em average-case analysis}, which assumes a probability distribution on the set of all possible inputs. Smoothed analysis, on the other hand, is sort of a hybrid between the above two --- it considers the worst-case input and then randomly perturbs it. If even small perturbations (say, adding random noise) lead to efficient run time, then this means that the worst-case is quite sensitive to the input parameters. In practice, there will  usually be noise and thus the algorithm is likely to avoid the worst-case behavior.

In this paper, we initiate the study of {\em smoothed analysis of distributed graph algorithms}. Our paper is motivated by the work of Dinitz et al.\ \cite{Dinitz_2018} who initiated the study of smoothed analysis for {\em dynamic networks} (we refer to Section \ref{section-related-work} for more details). A main contribution of our paper is positing smoothing models in the context of {\em distributed graph algorithms} and performing analyses of the models. While many smoothing models are possible for such algorithms, a key goal is to identify models that lead to non-trivial bounds on the distributed complexity (here we focus on time complexity) of fundamental graph algorithms.

We focus on the distributed minimum spanning tree (MST) problem in synchronous $\mathcal{CONGEST}$ networks (see Section \ref{sec:model} for details on this standard distributed computing model). The worst-case time (round) complexity of distributed MST has been extensively studied for the last three decades and tight bounds are now well established (see, e.g., \cite{Pandurangan_2018_EATCS, Pandurangan_2017}). There is an optimal distributed MST algorithm (see, e.g., \cite{dnabook}) that runs in $\tilde{O}(D + \sqrt{n})$ rounds, where $D$ is the graph diameter and $n$ is the number of nodes in the network. Also, there is a (essentially) matching lower bound of $\tilde{\Omega}(D + \sqrt{n})$ rounds that applies even to randomized Monte-Carlo distributed algorithms \cite{Sarma_2012}.

The lower bound is shown by presenting a weighted graph (in particular, a family of graphs) and showing that no distributed algorithm can solve MST faster. This raises a motivating question for smoothed analysis: Is the worst-case bound specific to the choice of the weighted graph (family)? Or more precisely, is it specific to the choice of the graph topology or the edge weights or both? If small perturbations do not change the worst-case bound by too much then we can say that the lower bound is {\em robust} and, if they do we can say that the bounds are {\em fragile} \cite{Dinitz_2018}. Thus smoothed analysis can lead to a better understanding of the complexity of distributed MST by studying perturbations of the worst-case input. This is one of the motivations in studying smoothed analysis of distributed algorithms.

However, to answer the above questions, one has to first come up with a suitable smoothing model. For example, one possible smoothing model, in the spirit of Speilman and Teng's original smoothing model,  would be perturbing the edge weights of the input graph by a small amount. It is apparent that if the perturbation is quite small relative to the weights (since the weights can be well-spaced), then this does not any effect on the lower bound --- it remains $\tilde{\Omega}(D + \sqrt{n})$. Another possible model, which we explore in this paper, again in the spirit of original model but now applied to perturbing the {\em topology of the input graph},  is smoothing the {\em input graph} by adding\footnote{One can also delete edges, although we don't consider this in this paper, see Section \ref{section-computing-model-main-smoothing-model}.} a {\em small} number of {\em random edges}. While there are a few possible ways to accomplish this, we focus on a particular smoothing model described next. We will discuss other smoothing models (which can be considered variants of this model) in Section \ref{section-other-smoothing-models}. A practical motivation for this kind of smoothing, i.e., adding a small number of random edges to a given graph, is that many real-world networks might be better modeled by  graphs with some underlying structure with some amount of randomness. For example, it is well know that real-world graphs have power-law degree distribution, but they are not arbitrary (worst-case) power-law graphs but can be reasonably modeled by random graphs with power-law degree distribution \cite{powerlaw}.

We consider a {\em smoothing model} (see Section \ref{section-computing-model-main-smoothing-model}) that is parameterized by a smoothing parameter $0  \leq  \epsilon =  \epsilon(n)  \leq  1$ that controls the amount of {\em random} edges that can be added to an input graph $G = (V, E)$ per round. $\epsilon(n)$  is typically  a small function of $n$, say, $\epsilon(n) = n^{-\frac{1}{4}}$. More precisely, our smoothing model allows any node to add a random edge with probability $\epsilon(n)$ in each round; the added edges can be used for communication in later rounds. (We note that the added edges, otherwise, do not change the underlying solution with respect to $G$; e.g., for MST, the added edges have weight $\infty$ and hence don't affect the MST of $G$.)  Besides this additional feature, nodes behave as in the standard model, i.e., can communicate using edges of $G$. We formally define the model in Section \ref{section-computing-model-main-smoothing-model}. Note that nodes can as well choose \emph{not} to use this additional feature. Depending on $\epsilon(n)$, the number of random edges added per round can be small. (In Section \ref{section-other-smoothing-models}, we consider a variant of this model, which essentially gives the same bounds as the ones discussed here.)

An alternate way of thinking about our smoothing model is as follows. Assume that the graph $G$ is embedded in a congested clique. The congested clique model has been studied extensively in the distributed computing literature; see, e.g., \cite{Censor-Hillel_2019, Barenboim_2018, Parter_2018_ICALP, Ghaffari_2018_PODC, Jurdzinski_2018, Korhonen_2018, Parter_2018_DISC_coloring, Parter_2018_DISC_spanners}). A node --- besides using its incident edges in $E$ --- can also {\em choose} to use a random edge (not in $G$, but in the clique) with probability $\epsilon$ in a round to communicate (once chosen, a random edge can be used subsequently till end of computation). Note that if $\epsilon$ is small, say, for example $\epsilon = O(n^{-\frac{1}{4}})$, then the probability of adding a random edge by a node in a round is small. In particular, if $\epsilon = 0$,  then this boils down to the traditional model, i.e., working on the given graph $G$ with no additional random edges, as $\epsilon$ increases, the number of random edges increases with it. 

We note that the smoothing model is sort of a hybrid between the traditional model where communication is allowed only along the edges of an arbitrary graph $G$ and a model  where $G$ is a random graph  (e.g., Erdos-Renyi graph model \cite{icdcs18,misrandom}) or an expander (see e.g., \cite{soda12} and the references therein).
In the smoothing model we start with an arbitrary graph $G$ and add random edges (parameterized by $\epsilon$).
In Section \ref{section-computing-model-main-smoothing-model}, we further explore relationship between the smoothing model and other distributed computing models.

Our goal is to study how the distributed complexity of MST varies as a function of $\epsilon(n)$ (among other usual graph parameters such as network size, network diameter, etc.). We show upper and lower bounds on the time complexity of distributed MST in the aforementioned smoothing model. We present a distributed algorithm, which (with high probability) computes an MST and runs in
\begin{center}
$\tilde{O}(\min\{\frac{1}{\sqrt{\epsilon(n)}} 2^{O(\sqrt{\log n})}, D + \sqrt{n}\})$
     rounds,
\end{center}
where $\epsilon$ is the smoothing parameter, $D$ is the network diameter, and $n$ is the network size, i.e., the number of nodes in the network.

To complement our upper bound, we also show a lower bound of
\begin{center}
    $\tilde{\Omega}(\min\{\frac{1}{\sqrt{\epsilon}}, D + \sqrt{n}\})$.
\end{center}    
Our bounds show non-trivial dependence on the smoothing parameter $\epsilon(n)$, and the bounds are essentially match except for a $2^{O(\sqrt{\log n})}$ factor and a polylogarithmic factor.
\subsection{Related work} \label{section-related-work}

Smoothed analysis was introduced by Spielman and Teng\cite{Spielman_2004} and has since been applied for various algorithms problems in the sequential setting (see, e.g., \cite{Spielman_2009} for a survey).

The only work that we are aware of in the context of smoothed analysis of distributed algorithms is that of Dinitz et al.\ \cite{Dinitz_2018} who study smoothed analysis of distributed algorithms for {\em dynamic networks}. Their dynamic network model is a dynamic graph $\mathcal{H} = G_1, G_2, \dots $ that describes an evolving network topology, where $G_i$ is the graph at round $i$. It is assumed that all graphs in $\mathcal{H}$ share the same node set, but the edges can change with some restrictions, e.g., each graph should be connected. They define a smoothing model for a dynamic graph that is parameterized with a smoothing factor $k \in \{1,2,\dots, \binom{n}{2}\}$. To $k$-smooth a dynamic graph $\mathcal{H}$ is to replace each static graph $G_i$ in $\mathcal{H}$ with a smoothed graph $G'_i$ sampled uniformly from the space of graphs that are: (1) within {\em edit distance} $k$ of $G$, and (2) are allowed by the dynamic network model (e.g., smoothing cannot generate disconnected graph). The edit distance is the number of edge additions/deletions needed to transform one graph to another, assuming they share the same node set.

Our smoothing model can also be thought of in terms of choosing a random graph within a \emph{positive} edit distance (i.e., edges are only \emph{added} to the original input graph) where the number of random edges added is proportional to $n\epsilon(n)$ (per round or in total --- see Section \ref{section-other-smoothing-models}).

Dinitz et al.\ study three well-known problems that have strong lower bounds in dynamic network models, namely, \emph{flooding}, \emph{random walks}, and \emph{aggregation}. For each problem, they study robustness/fragility of the existing bound by studying how it improves under increasing amounts of smoothing.

\section{Our Model} \label{sec:model}

We first discuss the distributed computing model and then discuss our smoothing model.
\subsection{Distributed Computing Model}

We consider a system of $n$ nodes, represented as an undirected, connected  graph $G = (V, E)$. Each edge
$e \in E$ may have an associated weight $w(e)$, which can be represented using $O(\log n)$ bits. If there is no weight on an edge, then it can be considered to be $\infty$. Each node $u$ runs an instance of a distributed algorithm and has a unique identifier $ID_u$ of $O(\log{n})$ bits.

The computation advances in \emph{synchronous} rounds, where in every round, nodes can send messages, receive messages that were sent in the same round by neighbors in $G$, and perform some local computation.

Our algorithms work in the $\mathcal{CONGEST}$ model~\cite{dnabook}, where in each round a node can send at most one message of size $O(\log{n})$ bits via a single edge (whether the edge is in $G$ or is a smoothed edge).
\subsection{Smoothing Model} \label{section-computing-model-main-smoothing-model}

Given a (arbitrary) undirected connected graph $G(V,E)$ (throughout $n=|V|)$, the smoothing model 
 allows adding some random edges to the input graph $G$, thereby ``perturbing'' graph structure. We call this process \textit{smoothing}, where we add  a small number of random edges to the original graph. We describe the process of adding edges which is parameterized by a smoothing parameter $0\leq \epsilon = \epsilon(n) \leq 1$ as follows. The smoothing parameter (which in general is a function of $n$, the network size\footnote{We sometimes
just write $\epsilon$, understanding it to be a function of $n$.} controls
the amount of random edges that can be added per round.
Henceforth, we call this as the {\em $\epsilon$-smoothing model}.

More precisely, every node, in every round,
 with probability $\epsilon$ (the smoothing parameter) can {\em add} an edge to a {\em random} node (chosen
 uniformly at random from $V$) in the graph. Let the added random edges form the set $R$ (different from the original edge set $E$). Note that we allow multi-edges in the random edge choosing process; however, if there is more than one edge between two nodes, then only one edge (especially, if it belongs to $E$) that matters.  The added edge persists for future rounds and can be used  henceforth for communication; its weight is $\infty$.  A distributed algorithm can potentially exploit these
 additional edges to improve the time  complexity.\footnote{In this paper, we focus only on
 time complexity, but message complexity can also be relevant.}

In this work,
we only consider adding edges to the graph; one can also consider deleting edges from the original graph.
 However, for many problems such as MST, it is arguably more appropriate to (potentially) add edges. In fact, deleting edges can change the graph. Whereas, in  the  $\epsilon$-smoothing model, since the added edges to the given graph $G$ are purely communicating edges (with weight $\infty$), the MST with respect to $G$ is unchanged. In fact, the model allows us to study tradeoffs between the amount of random edges added to the efficiency of computing a solution of $G$. 
 
 As mentioned earlier, the $\epsilon$-smoothing model gives a ``smooth" tradeoff between the traditional CONGEST model where there no additional random edges ($\epsilon =0$) in $G$ (the input graph) and a model where there is a random graph embedded in $G$. In this sense, it is different from studying distributed computing on (purely) random graph models or expander graph models (e.g., \cite{icdcs18,soda12,misrandom}. We note the work of Ghaffari et al\cite{Ghaffari_2017_PODC,Ghaffari_2018_DISC} embeds a  random graph in a given graph $G$ and uses this embedding to design algorithms that depend on the mixing time of $G$. This is still the traditional CONGEST model, though we use their result in our algorithms. 
 
 As mentioned in Section \ref{section-introduction},
 we can also relate the well-studied {\em congested clique} model to the $\epsilon$-smoothing model and also give a way to understand computation tradeoffs between the traditional CONGEST model and the congested clique model. Assuming the input graph $G$ is embedded in a congested clique, the $\epsilon$ parameter controls the power to use the non-graph clique edges. If $\epsilon = 0$, then we have the traditional CONGEST model and for any $\epsilon > 0$, if we spend enough rounds, then one can throw a random edge between every pair of nodes which boils down to the congested clique. Of course, this is costly, which illustrates the power of the congested clique model (where the clique edges can be used for ``free"). Studying time and message complexity bounds in terms of $\epsilon$ can help us understand the power of the clique edges with respect to solving a problem on a given input graph.

\section{Distributed MST in the Smoothing Model}

For the sake of exposition, we first present a distributed MST algorithm that runs in 
$$\tilde{O}(\min\{\frac{1}{\epsilon} + 2^{O(\sqrt{\log n})} ,\allowbreak D + \sqrt{n}\}) \text{ rounds}.$$
Then we present an improved algorithm that runs in $$\tilde{O}(\min\{\frac{1}{\sqrt{\epsilon(n)}} 2^{O(\sqrt{\log n})}, \allowbreak D + \sqrt{n}\}) \text{ rounds}.$$ The second algorithm is  a modification of the first and its time complexity
approaches the lower bound of $\tilde{\Omega}(\min\{\frac{1}{\sqrt{\epsilon}}, \allowbreak D + \sqrt{n}\})$ shown in Section \ref{sec:lower}. Thus. up to a multiplicative factor of $2^{O(\sqrt{\log n})}\polylog n$, the bounds are tight.

We give a high-level overview of our approach of our first algorithm  before we get into the technical details. 
The algorithm can be described in two parts which are described in Sections \ref{section-constructing-an-expander} and \ref{section-constructing-an-MST} respectively.
At the outset we note that if $1/\epsilon$   is larger compared to $\tilde{O}(D+\sqrt{n})$, then we simply run
the standard time-optimal MST algorithm (\cite{dnabook}) without doing
smoothing.

\subsection{Part 1: Constructing an Expander} \label{section-constructing-an-expander}

Initially the algorithm exploits the smoothing model to add about $O(\log n)$ random edges per node. This can be accomplished as follows: each node (in parallel) tries to make a random edge selection for the (first) $\Theta(\frac{\log{n}}{\epsilon})$ rounds, where $\epsilon$ is the smoothing parameter. Since the probability of adding a random edge, i.e., a smoothing edge, is $\epsilon$ per round, it is easy to show that with high probability a node will add $\Theta(\log{n})$ random edges. Via a union bound, this holds for all nodes.

Now consider the graph $R(G)$ induced {\em only} by the smoothed (random) edges of $G$ after $\Theta(\frac{\log n}{\epsilon})$ rounds. In the following lemma \ref{lemma:conductance}, we show that $R(G)$ is a graph with $O(\log n)$ \emph{mixing time}.

The proof of this result comes from the relation of $\epsilon$-smoothing model to Erdos-Renyi random graph, which we will show next. 

\begin{lemma}
\label{lemma:smooth-gnp}
  Consider a graph $G(V,E)$ under  $\epsilon$-smoothing. 
  If we invoke smoothing for $\ell$ rounds (where $\ell \epsilon = o(n)$), then the graph induced by the smoothed edges is an Erdos-Renyi random graph $G(n, p)$ where $p = \Theta\left( \frac{l\epsilon}{n} \right)$.
\end{lemma}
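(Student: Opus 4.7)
My plan is to separately verify the marginal edge probability and the (approximate) independence structure, combining them via a coupling argument with an idealized fully-independent edge-sampling process.

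For the marginal: fix a pair $\{u,v\}$. In a single round, the event ``$u$ picks $v$'' has probability $\epsilon/n$ (namely $\epsilon$ that $u$ decides to smooth, times $1/n$ for the uniform target), and ``$v$ picks $u$'' has the same probability, independently of the first, since it is governed by $v$'s independent coin flips. Therefore the edge is absent after one round with probability $(1-\epsilon/n)^2$, and after $\ell$ mutually independent rounds with probability $(1-\epsilon/n)^{2\ell}$. The hypothesis $\ell\epsilon = o(n)$ forces $\ell\epsilon/n = o(1)$, so a Taylor expansion gives
\begin{equation*}
p \;=\; 1 - (1-\epsilon/n)^{2\ell} \;=\; \frac{2\ell\epsilon}{n}\bigl(1 - o(1)\bigr) \;=\; \Theta\!\left(\frac{\ell\epsilon}{n}\right),
\end{equation*}
matching the claim.

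For the joint distribution: edges sharing an endpoint are not strictly independent, since in one round a node makes at most one pick, making ``$u$ picks $v$'' and ``$u$ picks $w$'' mutually exclusive within that round. The plan is to couple the smoothing process with an idealized process in which, for each ordered pair $(u,v)$ and each round $t$, one draws an independent $\text{Bernoulli}(\epsilon/n)$ and adds the (undirected) edge whenever the variable is $1$. The idealized graph, being a union over $\ell$ rounds of independent Bernoulli edges, is exactly $G(n,p)$ with $p$ above. The smoothing process agrees with the idealized one in every round in which no node has two or more simultaneous active draws; a given node has this multi-pick event with probability at most $\binom{n}{2}(\epsilon/n)^2 = O(\epsilon^2)$ per round. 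Because multi-picks only \emph{thin} the idealized graph, the smoothed graph is stochastically dominated by $G(n,p)$ and also stochastically dominates $G(n,p')$ for a slightly smaller $p' = \Theta(\ell\epsilon/n)$, obtained by conditioning on there being no multi-pick collisions.

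The main obstacle is that this coupling gives \emph{exact} distributional equality with $G(n,p)$ only in the regime where multi-picks are negligible (e.g., $\epsilon = o(1/\sqrt{n})$); more generally, the best one can say is the two-sided stochastic domination just described. This weaker conclusion is, however, enough for the downstream use of the lemma, since the $O(\log n)$ mixing-time claim invoked from Lemma~\ref{lemma:conductance} depends only on the edge-density properties of $G(n,p')$ and is preserved under stochastic domination.
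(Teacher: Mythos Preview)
Your marginal computation is exactly the paper's proof, which consists solely of the line $p = 1 - (1-\epsilon/n)^{2\ell} = \Theta(\ell\epsilon/n)$ and does not address independence at all. You are right to flag the independence issue---edges sharing an endpoint are genuinely correlated, so the lemma is not literally true as stated---and a two-sided stochastic sandwich is the natural repair.

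However, the specific dominations you claim do not hold. The step ``multi-picks only thin the idealized graph, so the smoothed graph is stochastically dominated by $G(n,p)$'' breaks because in a single round the idealized process has $\Pr[u\text{ makes at least one pick}] = 1-(1-\epsilon/n)^n < \epsilon$, strictly less than the smoothing process's $\epsilon$; thinning the idealized process therefore cannot recover the smoothing marginal, and in fact no coupling with smoothed $\subseteq$ idealized exists. Concretely, take $n=3$, $\ell=1$, $\epsilon=1$: then $p=5/9$, and the monotone event ``at least one edge present'' has probability $1-(1/3)^3 = 26/27$ in the smoothed graph versus $1-(4/9)^3 = 665/729$ in $G(3,5/9)$, so the smoothed graph is \emph{not} dominated by $G(n,p)$. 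The lower domination has the analogous defect: conditioning the idealized process on ``no multi-picks'' still does not yield a product measure over edges, so it is not a $G(n,p')$ either. What actually makes the downstream Lemma~\ref{lemma:conductance} go through is the direct conductance calculation in the paper's appendix, which bounds $|\partial S|$ and $\mathrm{Vol}(S)$ via Chernoff and a union bound over cuts, using only that each node independently throws $\Theta(\log n)$ uniform darts---an argument that never needs pairwise edge independence.
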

\begin{proof}
  We calculate the probability of a smoothed edge between nodes $u$ and $v$. Clearly the edge is present if either $u$ or $v$ successfully adds the other end during $\ell$ steps. Hence,
  $p = 1 - \left( 1 - \frac{\epsilon}{n} \right)^{2\ell} = \Theta\left( \frac{\ell\epsilon}{n} \right)$.
\end{proof}

\begin{remark}
    The following lemma (Lemma \ref{lemma:conductance}) applies only to $R(G)$ and \emph{not necessarily} to $G \cup R(G)$.
\end{remark}

\begin{lemma} \label{lemma:conductance}
    Let $G=(V,E)$ be an arbitrary undirected graph and let $R(G) = (V,F)$ be the random graph induced (only) by the set $F$ of random (smoothed) edges after $\Theta(\frac{\log n}{\epsilon})$ rounds. Then, with high probability, $R(G)$ has mixing time $\tau_{mix}(R)=O(\log{n})$.
\end{lemma}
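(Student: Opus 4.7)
The plan is to reduce the lemma to a classical fact about Erdős--Rényi random graphs. Invoking Lemma~\ref{lemma:smooth-gnp} with $\ell = \Theta(\log n / \epsilon)$ (so that $\ell \epsilon = \Theta(\log n) = o(n)$), the graph $R(G)$ has the distribution of $G(n,p)$ with $p = \Theta(\log n / n)$, i.e., an Erdős--Rényi graph a constant factor above the connectivity threshold. Crucially, $R(G)$ depends only on the random smoothing process and not on the original edge set $E$, so we may analyze it purely as an Erdős--Rényi graph.

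The main step is to establish that $G(n,p)$ with $p = c\log n / n$ for a sufficiently large constant $c$ has conductance $\Phi(R(G)) = \Omega(1)$ with high probability. This follows from a standard Chernoff + union-bound argument: for a fixed $S \subseteq V$ with $|S| = k \leq n/2$, the cut count $|E(S, \bar S)|$ is distributed as $\mathrm{Bin}(k(n-k), p)$ with mean $\Theta(k \log n)$, while $\mathrm{vol}(S)$ concentrates around $pkn = \Theta(k \log n)$. A Chernoff bound gives that the ratio $|E(S, \bar S)| / \mathrm{vol}(S)$ is bounded below by a fixed positive constant except on an event of probability $\exp(-\Omega(c k \log n))$; taking $c$ large enough beats the $\binom{n}{k} \leq \exp(k \log n)$ subsets of size $k$, and a further union bound over $k = 1, \ldots, n/2$ yields $\Phi(R(G)) = \Omega(1)$ whp. (Alternatively, one can cite directly known spectral/expansion bounds for $G(n,p)$ near the connectivity threshold.)

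The final step is to apply the standard Jerrum--Sinclair / Cheeger relationship between conductance and mixing time of the lazy random walk, namely $\tau_{mix}(R(G)) = O(\log(1/\pi_{\min}) / \Phi^2)$. Since whp all degrees in $R(G)$ are $\Theta(\log n)$, the minimum stationary probability satisfies $\pi_{\min} = \Theta(1/n)$, and combined with $\Phi = \Omega(1)$ this gives $\tau_{mix}(R(G)) = O(\log n)$, as claimed.

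The main subtlety will be the conductance bound for very small subsets $S$, which is precisely why $p = \Theta(\log n / n)$ (and not any sparser rate) is necessary: for $k = o(\log n)$ the binomial is no longer sharply concentrated, but the expected cut size $\Theta(k \log n)$ is still $\omega(k)$, which is enough for the Chernoff tail to dominate the $(en/k)^k$ choices of $S$. Handling this range of $k$ is the only delicate point; everything else is routine random-graph and random-walk machinery.
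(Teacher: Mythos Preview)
Your proposal is correct and follows essentially the same route as the paper: invoke Lemma~\ref{lemma:smooth-gnp} to identify $R(G)$ with $G(n,\Theta(\log n/n))$, establish constant conductance via a Chernoff-plus-union-bound argument over all cuts $S$ with $|S|\le n/2$, and then convert conductance to an $O(\log n)$ mixing-time bound via the Jerrum--Sinclair/Cheeger inequality. The paper's appendix carries out exactly this computation (working slightly more directly with the per-node edge counts $r(u)$ rather than the $G(n,p)$ binomial model, but the substance is identical).
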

\begin{proof}
  Using Lemma \ref{lemma:smooth-gnp}, where $\ell=\Theta(\frac{\log n}{\epsilon})$, we have $R(G)$ is a Erdos-Renyi random graph $G(n, p = \Theta(\frac{\log n}{n})$. It is well-known that with high probability the this random graph is an expander (i.e., has constant conductance) and thus has $O(\log n)$ mixing time (see e.g., \cite{Ghaffari_2017_PODC}). 
  (We refer to the Appendix --- see Section \ref{section-proof-of-the-expansion-and-conductance-lemma} --- for an alternate, self-contained proof.)
\end{proof}

\subsection{Part 2: Constructing an MST} \label{section-constructing-an-MST}

In the second part, the algorithm uses $R(G)$ as a ``communication backbone'' to construct an MST in  $O(\log^2{n})2^{O(\sqrt{\log{n}})}$ rounds.

Our algorithm  crucially uses a routing result due to
 Ghaffari et al. \cite{Ghaffari_2018_DISC, Ghaffari_2017_PODC}
 who show, given an arbitrary  graph $G=(V,E)$, how to
 do {\em permutation} (or more generally, {\em multi-commodity}) routing
 fast.  We briefly describe the problem and the main result here and refer to \cite{Ghaffari_2018_DISC} for the details. Permutation or multi-commodity routing is defined
 as follows: given source-destination pairs of nodes $(s_i,t_i) \in V \times V$ and suppose $s_i$ wants to communicate with $t_i$ ($t_i$ does not know $s_i$ beforehand, but $s_i$ knows the ID of $t_i$). The {\em width}  of the  pairs is $W$
 if each $v \in V$ appears at most $W$ times as $s_i$ or $t_i$.  The goal is to construct a routing path $P_i$ (not necessarily simple) from $s_i$ to $t_i$ 
 such that the set of routing paths has low {\em congestion}
 and low {\em dilation}. Congestion is the maximum number of times any edge is used in all the paths. Dilation is simply the maximum length of the paths.   The main result of
 \cite{Ghaffari_2018_DISC} is that  routing paths $P_i$ with {\em low congestion and dilation} 
 can be found {\em efficiently}.
 Once such low congestion and dilation routing paths are found, using a standard trick of random delay routing, it is easy to establish
 that messages can be routed between the source and destination efficiently, i.e., proportional to congestion and dilation.

 \begin{theorem}[Efficient Routing](Theorem 8 from \cite{Ghaffari_2018_DISC}).Suppose we solve a multicommodity routing instance $\{(s_i, t_i)\}_i$  and
achieve congestion $c$ and dilation $d$. Then, in $\tilde{O}(c+d)$ rounds, every node $s_i$ can send one
$O(\log n)$-bit message to every node $t_i$, and vice versa.
\end{theorem}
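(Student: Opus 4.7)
The plan is to invoke the classical random-delay scheduling technique of Leighton, Maggs, and Rao, adapted to the $\mathcal{CONGEST}$ model. Given the routing paths $P_i$ with congestion $c$ and dilation $d$, I would first have each source $s_i$ draw an independent delay $\delta_i$ uniformly at random from $\{1, 2, \ldots, \lceil \alpha c \rceil\}$, where $\alpha$ is a sufficiently large constant to be chosen in the analysis. The message for the pair $(s_i, t_i)$ would then sit at $s_i$ for $\delta_i$ rounds and afterward traverse $P_i$ one edge per round. Since the delay is at most $\alpha c$ and the length of $P_i$ is at most $d$, every message arrives within $\alpha c + d$ rounds of the start, provided each edge traversal can actually be carried out in its scheduled round.

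The technical heart of the proof is showing that the per-round load on every edge stays small with high probability. Fix an edge $e$ and a round $r$. A path $P_i$ uses $e$ in round $r$ only when $\delta_i = r - j_i(e)$, where $j_i(e)$ is the (unique) position of $e$ along $P_i$; since at most $c$ paths pass through $e$ and each selects the matching delay with probability $1/\lceil \alpha c \rceil$, the expected number of paths crossing $e$ in round $r$ is at most $1/\alpha$. A standard Chernoff bound together with a union bound over the at most $n^2$ edges and the $O(c+d)$ rounds shows that, with high probability, the load on any edge in any round is $O(\log n / \log \log n)$.

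To finish, I would lift this ``ideal'' schedule to a legal $\mathcal{CONGEST}$ schedule by expanding each ideal round into $O(\log n / \log \log n)$ actual rounds, during which every node flushes its queued messages along each outgoing edge in FIFO order. Since the load bound ensures no queue grows beyond the expansion factor, no message is ever dropped or delayed beyond the stretched schedule, giving a total round complexity of $\tilde{O}(c + d)$. The reverse direction $t_i \to s_i$ is handled identically by reversing each $P_i$, which preserves both congestion and dilation; alternatively, one can interleave the two directions and incur only a constant factor.

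The main obstacle is verifying that the scheme requires no global coordination beyond what the model provides: each delay is chosen by the source locally and carried inside the message, while scheduling on each link reduces to a local FIFO policy, so all nodes can execute the protocol synchronously without shared randomness or a priori knowledge of the other pairs. The $\tilde{O}$ bound absorbs the $O(\log n / \log \log n)$ stretch factor from Chernoff/union-bound analysis and any additional polylogarithmic overhead needed to coordinate synchronization rounds; crucially, no dependence on the number of pairs or the width $W$ appears beyond what is already reflected in $c$ and $d$.
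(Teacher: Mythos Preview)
Your proposal is correct and matches the approach the paper indicates: the paper does not actually prove this theorem (it is quoted verbatim as Theorem~8 of \cite{Ghaffari_2018_DISC}) and only remarks that ``using a standard trick of random delay routing, it is easy to establish that messages can be routed between the source and destination efficiently, i.e., proportional to congestion and dilation.'' Your Leighton--Maggs--Rao random-delay argument is precisely that standard trick, so you have supplied the details the paper omits.
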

 
Next, let us formally restate their routing results for ease of discussion. First we state their result
on multi-commodity routing on a random graph
$G(n,\log n)$, i.e., a random graph where each node
has $O(\log n)$ random edges (each endpoint chosen uniformly at random.)

\begin{theorem}[] \label{theorem:commodity_routing}
(Theorem 1 from \cite{Ghaffari_2018_DISC})
Consider a multicommodity routing instance of width $\tilde{O}(1)$. 
There is a multicommodity routing algorithm on a random graph $G(n, (\log{n}))$ that achieves congestion and dilation $2^{O(\sqrt{\log{n}})}$, and runs in time $2^{O(\sqrt{\log{n}})}$.
\end{theorem}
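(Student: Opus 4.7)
The plan is to exploit the strong structural properties of $G(n, \log n)$---constant edge expansion, diameter $O(\log n / \log \log n)$, and rapid neighborhood doubling---and feed them into a recursive routing scheme with $k = \Theta(\sqrt{\log n})$ levels. The $\sqrt{\log n}$ depth is the characteristic signature of a balanced recursion in which each level contributes only a constant factor to both congestion and dilation, yielding a total multiplicative blow-up of $c^{k} = 2^{O(\sqrt{\log n})}$.

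First, I would verify via standard Chernoff/expansion arguments (in the same spirit as Lemma \ref{lemma:conductance}) that $G(n, \log n)$ is an expander with constant conductance whp, so that short paths exist between any two vertices and load can be spread uniformly. Then I would set up the recursion: partition the vertex set into clusters whose sizes shrink the routing instance geometrically, pick a representative inside each cluster, handle intra-cluster routing directly by exploiting the inherited expansion of the induced subgraph, and recurse on the representatives for the inter-cluster routing subproblem. Each message would first be routed to a uniformly random intermediate representative before heading toward its true destination (Valiant's trick); combined with the width-$\tilde{O}(1)$ hypothesis and Chernoff plus union bounds, this keeps the per-level congestion constant. Finally, once a path system with congestion $c$ and dilation $d$ both equal to $2^{O(\sqrt{\log n})}$ is exhibited, the preceding ``Efficient Routing'' theorem (random-delay scheduling, \`a la Leighton--Maggs--Rao) delivers all messages in $\tilde{O}(c + d) = 2^{O(\sqrt{\log n})}$ rounds.

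The hardest step, I expect, is twofold. Numerically, one must tune cluster sizes and recursion depth so that congestion and dilation both stay balanced at $2^{O(\sqrt{\log n})}$, neither quantity dominating; this is the classical balance point of such recursions. Structurally---and more subtly---one must argue that ``randomness is preserved'' across recursion levels: the induced subgraph on each cluster and the contracted graph on representatives must continue to look like an expander, either by a careful conditioning argument on the original $G(n, \log n)$ or by allocating fresh random edges per level of the hierarchy. Without this structural invariant the induction collapses at the first recursive call, so I would expect the bulk of the technical work in Ghaffari et al.~\cite{Ghaffari_2018_DISC} to concentrate precisely here.
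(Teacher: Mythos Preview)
The paper does not prove this theorem at all: it is quoted verbatim as ``Theorem 1 from \cite{Ghaffari_2018_DISC}'' and used as a black box. The surrounding text says explicitly ``let us formally restate their routing results for ease of discussion,'' and the paper then immediately moves on to the next cited result (Lemma~\ref{lemma:fast-routing}). So there is no proof in the paper to compare your proposal against.

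Your sketch is a plausible high-level outline of the Ghaffari--Kuhn--Su construction (recursive hierarchy of depth $\Theta(\sqrt{\log n})$, Valiant randomization, random-delay scheduling at the end), and you correctly identify the delicate point---maintaining an expander-like structure at every level of the recursion. But for the purposes of \emph{this} paper the statement is simply imported, and any attempt to reprove it here would be out of scope; the authors only need it as a subroutine inside their $k$-aggregate routing and MST algorithms.
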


Then, by the construction of a random-graph-like hierarchy routing over a given graph $G$, we have the following result.

\begin{lemma} \label{lemma:fast-routing} (Lemma 11 from \cite{Ghaffari_2018_DISC})
On any graph $G$ with $n$ nodes and $m$ edges, we can embed
a random graph $G(m, d)$ with $d \geq 200 \log n$ into $G$ with congestion $\tilde{O}(\tau_{mix} \cdot d)$ and dilation 
$\tau_{mix}$ in time $\tilde{O}(\tau_{mix} \cdot d)$.
\end{lemma}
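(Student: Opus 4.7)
\textbf{Proof plan for Lemma \ref{lemma:fast-routing}.} My plan is to build the embedding by running short random walks from every edge of $G$: each of the $m$ edges of $G$ serves as a virtual vertex of the random graph $G(m,d)$, and the $d$ ``random neighbors'' of a vertex are obtained as the endpoints of $d$ independent random walks of length $\tau_{mix}$ starting from the corresponding edge of $G$. The path of the $i$-th walk serves as the embedding path for the $i$-th random edge incident to that virtual vertex. This reduces the problem to two sub-claims: (i) the endpoints of these walks really do produce a sample (close to) $G(m,d)$, and (ii) the $m \cdot d$ walks can be executed together with congestion $\tilde{O}(\tau_{mix}\cdot d)$ and overall time $\tilde{O}(\tau_{mix}\cdot d)$.

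For sub-claim (i), the natural random walk on $G$ has stationary distribution proportional to degree, so I would use the standard edge-based (or Metropolis-Hastings / lazy) walk on $G$, whose stationary distribution is uniform over the $m$ edges. After $\Theta(\tau_{mix}\log n)$ steps the total variation distance to uniform is $1/\poly(n)$, so each walk's terminal edge is essentially a uniformly random edge of $G$, and since the walks are started from disjoint tokens they can be coupled with the edges of a genuine $G(m,d)$ sample with $1/\poly(n)$ total error, at the cost of just an extra $\log n$ factor hidden in $\tilde{O}(\cdot)$.

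For sub-claim (ii) --- which I expect to be the hard part --- I would bound the congestion by a Chernoff-plus-union-bound argument over all edges. The total number of walk-hops is $m \cdot d \cdot \tau_{mix}\log n$. Because the walks reach the stationary (uniform-on-edges) distribution quickly, at each step the probability that a given walk crosses a given edge $e$ is $O(1/m)$, so the expected load on $e$ is $\tilde{O}(d \cdot \tau_{mix})$. The delicate point is that the load on $e$ is a sum of only weakly-dependent indicators (different walks are independent, but different steps of the same walk are not), so I would invoke a standard martingale / bounded-differences concentration (or the hitting-time concentration bounds used in the random-walk-routing literature, e.g.\ the approach of Ghaffari--Kuhn) to show $\Pr[\text{load}(e)\gg \tilde{O}(d\cdot \tau_{mix})] \leq n^{-c}$ for every edge $e$, then union bound.

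Once congestion $c = \tilde{O}(d\cdot \tau_{mix})$ and dilation $\tau_{mix}$ are established, I would invoke the Leighton--Maggs--Rao random-delay scheduling theorem (which is exactly the ingredient underlying the routing result of Ghaffari et al.\ cited just before) to conclude that the $md$ random walks can be executed simultaneously in $\tilde{O}(c+\tau_{mix}) = \tilde{O}(d\cdot \tau_{mix})$ rounds of \textsc{Congest}, each walk returning its endpoint to its origin within the same time budget. This produces the $d$ random neighbors and the associated embedded paths for every virtual vertex of $G(m,d)$, completing the embedding within the stated congestion, dilation and time guarantees.
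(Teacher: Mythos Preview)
The paper does not prove this lemma at all: it is quoted verbatim as Lemma~11 of \cite{Ghaffari_2018_DISC} and used as a black box, so there is no in-paper proof to compare against.

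Your plan is in fact the argument used in \cite{Ghaffari_2017_PODC,Ghaffari_2018_DISC}: identify each of the $m$ edges of $G$ with a virtual vertex, launch $d$ independent lazy random walks of length $\Theta(\tau_{mix})$ from each edge, and let each walk simultaneously supply (a) a near-uniform random endpoint edge (the $G(m,d)$ neighbor) and (b) the embedding path for that virtual edge. One correction worth flagging: you do not need Leighton--Maggs--Rao, and invoking it is slightly circular since the paths are not known in advance. The walks are simply executed step by step. Because the tokens are seeded exactly at the stationary distribution ($d$ tokens per edge), at every single walk-step the expected number of tokens crossing any fixed edge is $\Theta(d)$; the tokens are independent across walks, so with $d\geq 200\log n$ a plain Chernoff bound (no martingale or bounded-differences machinery required) gives per-step edge load $O(d)$ w.h.p. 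Hence one walk-step costs $O(d)$ $\mathcal{CONGEST}$ rounds, and $\tau_{mix}$ steps give the stated $\tilde{O}(d\cdot\tau_{mix})$ running time and congestion, with dilation $\tau_{mix}$ immediate from the walk length. Your concern about dependence between different steps of the same walk is a red herring for this bound: you only ever need concentration of the load at a fixed time step, where distinct walks are genuinely independent.
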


Using Lemma
\ref{lemma:fast-routing}, we have the following trivial corollary for permutation routing.

\begin{corollary}[Permutation Routing]\label{corollary:permutation_routing}
  Consider a graph $G=(V,E)$ and a set of $n$ point-to-point routing requests $(s_i, t_i)$, where
  $s_i, t_i$ are IDs of the corresponding source and destination. Each node of $G$ is the source and
  the destination of exactly one message. Then there is a randomized algorithm that delivers all
  messages in time $\tau_{mix}(G)2^{O(\sqrt{\log{n}})}$, w.h.p.
\end{corollary}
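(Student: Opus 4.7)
The plan is to reduce permutation routing on $G$ to multicommodity routing on a random graph $H$ embedded in $G$, and then compose the resulting routing paths so that the Efficient Routing theorem (Theorem~8 of \cite{Ghaffari_2018_DISC}, stated first in this section) finishes the job. First I would observe that the given instance has width $1$: each node is a source exactly once and a destination exactly once, hence appears at most twice across all source-destination pairs, well within the $\tilde{O}(1)$-width hypothesis of Theorem~\ref{theorem:commodity_routing}.

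Next I would invoke Lemma~\ref{lemma:fast-routing} with $d = \Theta(\log n)$ (say $d = 200 \log n$) to embed a random graph $H = G(n, d)$ into $G$. This takes $\tilde{O}(\tau_{mix}(G) \cdot \log n)$ rounds and produces a collection of paths in $G$ realizing the edges of $H$ with congestion $c_1 = \tilde{O}(\tau_{mix}(G) \cdot \log n)$ and dilation $d_1 = \tau_{mix}(G)$. Then I would apply Theorem~\ref{theorem:commodity_routing} to the $n$ source-destination pairs on $H$, obtaining routing paths in $H$ with congestion $c_2 = 2^{O(\sqrt{\log n})}$ and dilation $d_2 = 2^{O(\sqrt{\log n})}$.

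Composing the two layers, each $H$-edge appearing on one of these paths gets replaced by its embedded $G$-path, yielding a collection of routing paths in $G$ with congestion at most $c_1 c_2$ and dilation at most $d_1 d_2$, both of which equal $\tau_{mix}(G) \cdot 2^{O(\sqrt{\log n})}$ after absorbing $\polylog n$ factors into the subexponential term. The Efficient Routing theorem then converts such a path collection into an actual distributed algorithm that delivers every message in $\tilde{O}(c + d) = \tau_{mix}(G) \cdot 2^{O(\sqrt{\log n})}$ rounds w.h.p., which is the claimed bound; the embedding cost $\tilde{O}(\tau_{mix}(G) \cdot \log n)$ is also absorbed.

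The one point requiring minor care is that Theorem~\ref{theorem:commodity_routing} is phrased as an algorithm running directly on a standalone random graph, whereas here $H$ lives as a virtual graph inside $G$. One either treats the theorem as providing the existence of low congestion/dilation paths on $H$ (so the composition-plus-Efficient-Routing step above both constructs and executes them on $G$), or simulates each round of that algorithm on $G$ through the embedding at $\tilde{O}(\tau_{mix}(G) \cdot \log n)$ overhead per round; either way the dominant cost stays $\tau_{mix}(G) \cdot 2^{O(\sqrt{\log n})}$. This bookkeeping is the main, and fairly modest, obstacle.
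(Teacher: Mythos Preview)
Your proposal is correct and follows the same route the paper intends: the paper states this as a ``trivial corollary'' of Lemma~\ref{lemma:fast-routing} without further argument, and you have spelled out exactly the composition of Lemma~\ref{lemma:fast-routing}, Theorem~\ref{theorem:commodity_routing}, and the Efficient Routing theorem that this entails. One small slip: Lemma~\ref{lemma:fast-routing} embeds a random graph $G(m,d)$ on $m$ (the number of edges of $G$) vertices, not $G(n,d)$; the standard fix (each vertex of $G$ is identified with one of its incident edges) is routine and does not affect your bounds.
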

 
 Based on their multicommodity routing algorithm, they showed how to 
 construct an MST of $G$ in 
  $\tau_{mix}(G)2^{O(\sqrt{\log{n}})}$ rounds (Ghaffari et al. \cite{Ghaffari_2017_PODC}).
  However, this MST algorithm cannot be directly employed in our setting, i.e.,  to construct a MST of $G$, over $G \cup R(G)$.

 We briefly summarize the idea from section 4 in their paper \cite{Ghaffari_2017_PODC}, to explain
why this
algorithm is not applicable  directly in a black box manner. The main reason for this: while the algorithm of
\cite{Ghaffari_2017_PODC} operates on the graph $G$, ours operates on graph $G' = G\cup R(G)$. Applying the algorithm
directly to $G\cup R(G)$ can (in general) yield an algorithm running in $\tau_{mix}(G')2^{O(\sqrt{\log{n}})}$ rounds where
  $\tau_{mix}(G')$ is the mixing time of $G'$. Note that $\tau_{mix}(G')$ (in general) can be of the same
  order of $\tau_{mix}(G)$ (even for constant smoothing parameter $\epsilon$) in some graphs\footnote{For example
  consider two cliques of size $n/2$ connected to each other via $O(n)$ edges.}. Thus the running time
  bound does not (in general) depend on $\epsilon$ and does not give our desired bound
  of $\tilde{O}(\log n/\epsilon)$ rounds. We give more details on the approach
  of \cite{Ghaffari_2017_PODC} and then discuss our algorithm.

The approach of \cite{Ghaffari_2017_PODC}   is to modify Boruvka's algorithm \cite{Nesetril_2001}, where
the MST is built by merging tree fragments. In the beginning, each node is a fragment by itself. The fragment size grows  by merging. To ensure efficient communication within a fragment, they
maintain a virtual balanced tree for each fragment. A virtual tree is defined by virtual edges among nodes,
where virtual edges are  communication paths constructed by the routing algorithm. It follows that for
each iteration, fragment merging may increase the number of virtual edges of some node $v$
by $d_G(v)$, where $d_G(v)$ is the degree of $v$ in $G$. Since Borukva's method takes $O(\log n)$ iteration, the virtual degree of any node $v$
is at most $d_G(v) O(\log n)$. Thus virtual trees communication is feasible via commodity routing
by theorem \ref{theorem:commodity_routing} which takes $\tau_{mix}(G)2^{O(\sqrt{\log{n}})}$ rounds.
Applying the above approach directly to our setting yields only an MST algorithm running in $\tau_{mix}(G')2^{O(\sqrt{\log{n}})}$ rounds as mentioned in the last paragraph.

In our setting, to obtain an algorithm running in $\tilde{O}(\log n/\epsilon)$ rounds,
 we would like to perform routing (only) on $R(G)$ instead of $G\cup R(G)$. However, if we use the same algorithm of  \cite{Ghaffari_2017_PODC}  in $R(G)$ then during the computing
of the MST in $G$, some node $v$ may become overloaded by $d_G(v)$ virtual edges which causes
too much congestion in $R(G)$ (where each node has degree $\Theta(\log n)$ only) when $d_G(v)$ is large.
It follows that the routing is infeasible in $R(G)$ and the algorithm fails. To solve the
problem we proposed a modified algorithm that uses  \textit{aggregate routing}. 

The idea of aggregate routing is to perform permutation routing, where some destinations are the same. 
In other words assume a permutation routing problem where there are only $k \leq n$ (distinct) destinations and $t_1, \dots, t_k$ and there are $n$ sources
$s_1, \dots, s_n$. Let $C_i$ be the set of sources who have the same destination $t_i$. In aggregate routing,
we would like to {\em aggregate} the set of messages in $C_i$ and deliver it to $t_i$.
The aggregate function can be a  separable (decomposable)  function such as $min$, $max$ or $sum$. Then we can modify the permutation  routing easily as follows. For a node $u$ that is performing the routing,
suppose there are multiple messages arriving at $u$ in the same round. Then $u$ computes the aggregate of the messages belong to the same set $C_i$ (for every $1\leq i\leq k$) and forwards that message according to the routing algorithm. At the destination, 
the aggregate is computed over any received messages destined for this particular destination. With this intuition, we state the following definitions
and lemma.

\begin{definition}[$k$-Aggregate Routing]\label{definition:k_aggregate_routing}
  Consider a graph $G=(V,E)$, where nodes are divided into $k$ disjoint partitions $C_1, C_2,\dots,C_k$.
  For each partition $C_i$, there is a leader $l_i$, known to all members of $C_i$. Each node $u\in V$ has one
  message to deliver to its leader. Let $f$ be a
  separable aggregate function (such as $min$, $max$, or $sum$). The
  $k$-aggregate routing problem is to compute the aggregate $f$ over the nodes in each partition and route it to the corresponding leader of the partition.
\end{definition}
We show that $k$-aggregate routing can be solved in the same time bounds as multi-commodity routing.
\begin{lemma}\label{lemma:k_aggregate_routing}
  Consider a graph $G=(V,E)$ with an instance of $k$-aggregate routing problem.
  There is a randomized algorithm that solves the problem in time
  $\tau_{mix}(G)2^{O(\sqrt{\log{n}})}$, w.h.p.
\end{lemma}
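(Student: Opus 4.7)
The plan is to reduce $k$-aggregate routing to $O(\log n)$ rounds of low-width multicommodity routing on an embedded random graph, exploiting the separability of $f$ to keep each round's width at $O(1)$.

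First, invoke Lemma~\ref{lemma:fast-routing} to embed a random graph $G(n,\Theta(\log n))$ in $G$ with overhead $\tilde{O}(\tau_{mix}(G))$; on this embedded random graph, any width-$\tilde{O}(1)$ multicommodity routing instance can be completed in $2^{O(\sqrt{\log n})}$ rounds by Theorem~\ref{theorem:commodity_routing}. Second, for each partition $C_i$, set up a virtual balanced binary aggregation tree $T_i$ of depth $O(\log|C_i|)=O(\log n)$ whose nodes are the members of $C_i$ and whose root is $l_i$, so that every member of $C_i$ occupies exactly one position in $T_i$.

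The aggregation is then executed bottom-up in $O(\log n)$ phases. In phase $j$, every node occupying a depth-$j$ position (counted from the leaves) of its tree applies $f$ to its own value together with the partial aggregates it received from its children in phase $j-1$, and forwards the result to its parent in $T_i$. Because the partitions are disjoint and each node lies at a single level of a single tree, every phase is a multicommodity routing instance in which each node sends one message and receives at most two, i.e., has width $O(1)$. By Theorem~\ref{theorem:commodity_routing}, each phase runs in $\tau_{mix}(G)\cdot 2^{O(\sqrt{\log n})}$ rounds w.h.p., and a union bound over the $O(\log n)$ phases keeps the total within $\tau_{mix}(G)\cdot 2^{O(\sqrt{\log n})}$ (the $\log n$ factor being absorbed by $2^{O(\sqrt{\log n})}$).

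The main technical obstacle will be the setup of the trees $T_i$, since Definition~\ref{definition:k_aggregate_routing} guarantees only that each member of $C_i$ knows $l_i$ and not the other members of $C_i$. The plan is to fix each tree's structure via a globally known hash of $(ID_v, ID_{l_i})$ so that every node can determine its own position in $T_i$ locally, and then to let each node learn the IDs of its $O(1)$ tree neighbors through a single preparatory round of constant-width multicommodity routing on the embedded random graph. This setup fits within the claimed time budget, so the overall running time remains $\tau_{mix}(G)\cdot 2^{O(\sqrt{\log n})}$ rounds w.h.p.
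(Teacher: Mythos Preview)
Your approach differs substantially from the paper's. The paper does not build explicit aggregation trees at all; instead it runs the multicommodity/permutation routing algorithm essentially unchanged, with each node $u$ sending $(m^u, l^u)$ toward its leader, and adds a single local rule: whenever an intermediate node in the same round holds several in-transit messages addressed to the same leader $l_i$, it replaces them by their $f$-aggregate before forwarding. Correctness is immediate from separability of $f$, and the running time follows from an invariant already maintained by the routing scheme---for every destination, every node forwards at most one message toward that destination per round---so aggregating on the fly never increases congestion or dilation beyond that of ordinary permutation routing. This is a one-paragraph argument and requires no per-partition coordination.

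Your tree-based plan is reasonable in outline, but the step you yourself flag as the ``main technical obstacle'' is a genuine gap, not a detail. A hash $h(ID_v, ID_{l_i})$ lets $v$ compute a position in an \emph{abstract} tree, but to learn the actual ID of its parent or children in $T_i$, $v$ must know which positions are occupied, i.e., which other nodes lie in $C_i$. Definition~\ref{definition:k_aggregate_routing} gives members no such information, and ``a single preparatory round of constant-width multicommodity routing'' cannot supply it: the only common point of contact within $C_i$ is $l_i$, and routing everything through $l_i$ is precisely the high-width instance you are trying to avoid. Making this work would require an additional primitive (e.g., distributed ranking/sorting on the embedded random graph, or a randomized pairing scheme iterated over $O(\log n)$ rounds) whose design and analysis is essentially the content of the lemma. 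The paper's on-the-fly aggregation sidesteps this issue entirely.
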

\begin{proof}
  Let $m^u$ denote the original message at $u$. We will send the tuple $(m^u, l^u)$ where $l^u$
  is the leader of the partition that $u$ belongs to. Let $f$ be the separable aggregate function.

  Each node $v$ executes the multi-commodity routing algorithm, with this extra rule:
  In a round $t$, suppose $v$ receives multiple messages having the same destination, which is some leader $l_i$,
  $v$ computes the aggregate $f_{t,i}$ over those messages, and prepares a tuple $(f_{t,i}, l_i)$.
   $v$ then  forwards the  aggregated message to the appropriate next-hop neighbor in the  routing path for the next
  round. $v$ performs the same reduction for messages targeting
  other leaders.

  It is easy to see that this routing schema is not congested, i.e., it is as fast as the multi-commodity/permutation routing.
  Observing the local invariance of permutation routing: for each destination $u$, every node, in each round of the
  algorithm, sends out at most one message routing towards $u$. This invariance holds in our $k$-aggregate
  routing, by the above construction.

  At the end of the routing, each leader aggregates over its received messages, which is the aggregate
  in its partition.
\end{proof}

While $k$-aggregate routing can be seen as {\em upcast} \cite{dnabook}, the complementary operation
of {\em downcast}, i.e., sending a message from a source to several destinations, can also be done
efficiently as shown below.

\begin{lemma}[$k$-Aggregate Routing and Downcast]\label{lemma:k_aggregate_routing_full}
  Consider a graph $G=(V,E)$ with an instance of the $k$-aggregate routing problem. Furthermore, we
  require that every member of a partition knows the corresponding aggregate value.
  There is a randomized algorithm that solve the problem in time
  $\tau_{mix}(G)2^{O(\sqrt{\log{n}})}$, w.h.p.
\end{lemma}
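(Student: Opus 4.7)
The plan is to extend Lemma \ref{lemma:k_aggregate_routing} by a downcast phase that replays the upcast in reverse. First, invoke Lemma \ref{lemma:k_aggregate_routing} so that every leader $l_i$ learns the partition aggregate $f(C_i)$; during this phase, each node logs a local transcript of the routing, recording for every round $t$ and every incident edge $e$ which leader's partial aggregate (if any) was transmitted across $e$ and in which direction.

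Let $T = \tau_{mix}(G)2^{O(\sqrt{\log n})}$ denote the upcast length. In the downcast, for $s = 1, 2, \ldots, T$ we execute in downcast round $s$ the mirror of upcast round $T-s+1$: for every edge $(u,v)$ that carried a partial $l_i$-aggregate from $v$ to $u$ in upcast round $T-s+1$, now $u$ transmits $f(C_i)$ to $v$. An induction on $s$ (equivalently, on decreasing $t = T-s+1$) shows that the sender $u$ already knows $f(C_i)$ at the start of downcast round $s$. The base case $s=1$ corresponds to $t=T$, the last upcast round, where the sender is either $l_i$ itself (which learned $f(C_i)$ at the close of the upcast) or a node that already received $f(C_i)$ in a prior step of that same mirrored round. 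For the inductive step, any partial $C_i$-aggregate held by $u$ immediately after upcast round $t$ was eventually passed on in some later upcast round $t' > t$ across an edge $(u,w)$ heading toward $l_i$; by the inductive hypothesis, the mirrored downcast round $T - t' + 1 < s$ has already delivered $f(C_i)$ from $w$ back to $u$, so $u$ possesses $f(C_i)$ when downcast round $s$ begins.

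Because each downcast round is an exact per-edge mirror of a corresponding upcast round, the $\mathcal{CONGEST}$ one-message-per-edge-per-round constraint is automatically preserved, and the overall congestion and dilation match those of the upcast. Every member $v \in C_i$ sent its initial message across some incident edge in upcast round $1$, so the mirrored downcast round $T$ delivers $f(C_i)$ back to $v$; hence all partition members learn the aggregate. The total complexity is $2T = \tau_{mix}(G)2^{O(\sqrt{\log n})}$, which absorbs into the claimed bound. The main technical point requiring care is the inductive timing argument above, namely verifying that in the reversed schedule each sending node already holds $f(C_i)$ by the moment it must forward it; once this is in hand, bandwidth, correctness, and the round bound all follow immediately from Lemma \ref{lemma:k_aggregate_routing}.
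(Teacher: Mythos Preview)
Your proposal is correct and follows essentially the same approach as the paper: run the $k$-aggregate upcast of Lemma~\ref{lemma:k_aggregate_routing} while recording the per-edge, per-round transmissions, then replay the schedule in reverse so that the final aggregate flows back along every recorded hop. The paper's own proof is a two-sentence sketch of this same record-and-reverse idea; your version simply spells out the inductive timing argument (that each reversed sender already holds $f(C_i)$ when needed) that the paper leaves implicit.
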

\begin{proof}
  Using the algorithm in lemma \ref{lemma:k_aggregate_routing}, each node also records the source of
  the incoming messages together with the associated leader ID and round number.
  Then the routing can be reversed. Starting from the leader of each partition, it sends out the
  aggregate message with its ID, and each node reverses the aggregate message towards the matching
  sender. Hence the downcast can be accomplished in the same number of rounds as $k$-aggregate routing.
\end{proof}

We are now ready to implement the MST algorithm.

\begin{theorem}
\label{theorem:mst-upper-bound}
  Consider a weighted graph $G=(V,E)$ in the $\epsilon$-smoothing model. There exists a randomized distributed
  algorithm that finds an $MST$ of $G$ in time $\tilde{O}(\frac{1}{\epsilon} + 2^{O(\sqrt{\log n})})$, w.h.p.
\end{theorem}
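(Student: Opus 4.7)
The plan is to execute the two-stage procedure outlined in Sections \ref{section-constructing-an-expander} and \ref{section-constructing-an-MST}. In Stage~1, I invoke the smoothing mechanism continuously for $\ell = \Theta(\log n / \epsilon)$ rounds; by Lemma \ref{lemma:conductance}, with high probability the subgraph $R(G)$ induced by the smoothed edges alone has mixing time $\tau_{mix}(R(G)) = O(\log n)$. These preparatory rounds account for the $\tilde{O}(1/\epsilon)$ term in the final bound.

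In Stage~2, I run Boruvka's algorithm on $G$, using $R(G)$ as the communication backbone together with the aggregate routing primitives of Lemmas \ref{lemma:k_aggregate_routing} and \ref{lemma:k_aggregate_routing_full}. Each of the $O(\log n)$ Boruvka phases proceeds as follows: (i) every node exchanges its current fragment identifier with its $G$-neighbors in a single $\mathcal{CONGEST}$ round along $G$; (ii) each node locally selects its lightest $G$-incident edge whose other endpoint lies in a different fragment --- its local minimum outgoing edge (MOE); (iii) taking the current fragments as the partitions of Definition \ref{definition:k_aggregate_routing} with $f = \min$, one invocation of Lemma \ref{lemma:k_aggregate_routing} on $R(G)$ delivers each fragment's global MOE to its leader in $\tau_{mix}(R(G)) \cdot 2^{O(\sqrt{\log n})} = O(\log n) \cdot 2^{O(\sqrt{\log n})}$ rounds; (iv) leaders exchange their MOEs via $R(G)$ to identify merger components, elect new leaders, and broadcast the updated fragment identifier back to every member using the downcast of Lemma \ref{lemma:k_aggregate_routing_full}. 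Fragment sizes at least double per phase, so $O(\log n)$ phases suffice, and correctness follows from the standard Boruvka argument with edge-ID tiebreaking to avoid cycles.

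The main obstacle --- the one flagged in the discussion preceding the theorem --- is that a high-degree node $v$ would overload its $O(\log n)$-degree neighborhood in $R(G)$ if it were forced to forward information about each of its $d_G(v)$ incident edges separately. Aggregation bypasses this cleanly: $v$ only ever emits a single $O(\log n)$-bit value per phase (its local MOE), so the routing instance has width $\tilde{O}(1)$, matching the hypothesis of Theorem \ref{theorem:commodity_routing} and Lemma \ref{lemma:k_aggregate_routing}; keeping partition information consistent across phases as fragments merge is handled by the downcast primitive. Summing the two stages, the total round complexity is $\tilde{O}(1/\epsilon) + O(\log^2 n) \cdot 2^{O(\sqrt{\log n})} = \tilde{O}(1/\epsilon + 2^{O(\sqrt{\log n})})$, and the whp guarantee follows from a union bound over the $O(\log n)$ phases against the whp statements of Lemmas \ref{lemma:conductance}, \ref{lemma:k_aggregate_routing}, and \ref{lemma:k_aggregate_routing_full}.
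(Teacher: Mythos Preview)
Your proposal is correct and follows essentially the same two-stage architecture as the paper: spend $\Theta(\log n/\epsilon)$ rounds to build the expander $R(G)$, then run $O(\log n)$ Boruvka phases over $R(G)$ using the $k$-aggregate routing and downcast primitives, with each node injecting only its single local MOE per phase so that the routing width stays $\tilde{O}(1)$.

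The one place where you are vaguer than the paper is step~(iv). You write that ``leaders exchange their MOEs via $R(G)$ to identify merger components, elect new leaders,'' but the merge graph on fragments can contain long chains, and you do not say how component identification is carried out within a single routing call. The paper resolves this with a head/tail coin flip: each fragment independently declares itself \emph{head} or \emph{tail}, and only a tail whose MOE points into a head actually merges (adopting that head's leader). This forces every merge to be star-shaped, so a single permutation routing (endpoint of the MOE to its own leader) followed by one downcast suffices, and one still gets a constant-factor expected drop in the fragment count per phase. Your version could be repaired either by inserting this same trick or by pointer-jumping among leaders for $O(\log n)$ sub-rounds; either way the extra $\polylog(n)$ factor is absorbed into $2^{O(\sqrt{\log n})}$, so the final bound is unaffected.
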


\begin{proof}
  As discussed in Part 1 (Section \ref{section-constructing-an-expander}) the algorithm executes $\Theta(\frac{\log n}{\epsilon})$ rounds of random edge selection to
  construct the random graph $R(G)$ (the graph induced only by the random edges). As shown in Lemma \ref{lemma:conductance}, $R(G)$ is an expander with constant conductance and hence has $O(\log n)$ mixing time.

  We use the permutation routing result of \cite{Ghaffari_2018_DISC} to construct the routing structure on $R(G)$, which allows permutation routing in
  $\tau_{mix}(R(G))2^{O(\sqrt{\log{n}})} = O(\log n)2^{O(\sqrt{\log{n}})} = 2^{O(\sqrt{\log{n}})}$ rounds.

  Our MST algorithm is based on  the standard Gallagher-Humblet-Spira (GHS)/Boruvka  algorithm, see e.g.,
  \cite{dnabook} which is also used in \cite{Ghaffari_2017_PODC} and many other MST algorithm see e.g., \cite{Pandurangan_2017, Elkin_2017_PODC}. The main modification compared
  to the standard GHS algorithm is that growth (diameter) of fragments are controlled during merging (as in controlled GHS algorithm \cite{dnabook}).

 We summarize the algorithm here and sketch how it is implemented. 
 
 Let $T$ be the (unique) MST on $G$ (we will assume that all weights of edges of $G$ are distinct).
A \emph{MST fragment} (or simply a {\em fragment}) $F$ of $T$ is defined as a connected subgraph of
$T$, that is, $F$ is a subtree of $T$. 
An \emph{outgoing edge} of
a MST fragment is an edge in $E$ where one adjacent node to the edge
is in the fragment and the other is not. The \emph{minimum-weight
outgoing edge (MOE)} of a fragment $F$ is the edge with {\em minimum
weight}  among all outgoing edges of $F$. As an immediate
consequence of the cut property for MST, the MOE of a fragment
$F=(V_F, E_F)$ is an edge of the MST. 

The GHS algorithm operates in {\em phases} (see e.g., \cite{dnabook}). 
In the first phase,  the GHS algorithm starts
with each individual node as a fragment by itself and continues till there is only one
one fragment --- the MST. That is, at the beginning, there are
$|V|$ fragments, and at the end of the last phase, a single fragment which is the
MST. All fragments find their MOE simultaneously in parallel.

In each phase, the algorithm maintains the following invariant: Each MST fragment has a leader
and all nodes know their respective parents and children. The root of the tree will be the leader.
Initially, each node (a singleton fragment) is a root node;
subsequently each fragment will have one root (leader) node.
 Each fragment is identified by the identifier of its root --- called the fragment ID --- and
 each node in the fragment knows its fragment ID.

  We describe one phase of the GHS (whp there will be $O(\log n)$ phases as discussed below). Each fragment's  operation is coordinated by the respective fragment's root (leader). 
Each phase consists of two major operations: (1) Finding MOE of all fragments and (2) Merging fragments via their MOEs.

We first describe how to perform the first operation (finding MOE).  
Let  $F$ be
  the current set of fragments.   Each node in $V$ finds its (local) minimum outgoing edge (if any), i.e.,
  an edge to a neighbor belonging to a different fragment that is of least weight. We then execute a
  $|F|$-Aggregate Routing and Downcast, using $min$ as the aggregate function, with each node being
  the source and having its fragment leader as its destination. At the end of this step, for each fragment, every member knows the minimum outgoing edge (MOE) of the entire fragment. This MOE edge will
  be chosen for merging in the second operation (merging fragments).  Also, each node keeps the reversed routing paths for further usage.

  Once the merging (MST) edges  are identified the second operation --- merging --- is processed. In order to avoid long chains of fragments, a simple randomized trick is used.
  Each fragment chooses to be a \textit{head} or \textit{tail} with probability $1/2$. Only \textit{tail} fragments will merge if their outgoing edge points to a \textit{head} fragment. It can be shown (e.g., see \cite{Ghaffari_2017_PODC} that this merging (still) leads to a constant factor decrease in the number of fragments (on average) and hence the number of phases will be $O(\log n)$ in expectation and with high probability.
  
  We now describe how a merge can be implemented efficiently,  There will be no change in the head fragment, but
  all the tail ones will update to acknowledge the head leader as the new leader. For a tail fragment
  $T$, let $v \in T$ be the node that is making the merge, $v$ knows the ID of the head leader by communicating with its neighbor which is a member of the head fragment). $v$ routes
  this new leader ID to the current leader of $T$. This is done in parallel by permutation routing.
  The current leader of $T$ downcasts the new leader ID to all $T$ members. This is done via the saved
  reversed routing paths. The merging is now completed, and time for one iteration is the same as that of permutation routing as before, i.e., 
  $O(\log n)2^{O(\sqrt{\log{n}})}$.
  
  There are $O(\log n)$ phases and each phase can be implemented in $O(\log n)2^{O(\sqrt{\log{n}})}$ rounds and hence the total time for Part 2 is $O(\log^2 n)2^{O(\sqrt{\log{n}})}$.
  
  The total time for MST construction is the number of rounds for Part 1 pluses the number of rounds in Part 2:
 
$\Theta(\frac{\log n}{\epsilon}) + O(\log^2 n)2^{O(\sqrt{\log{n}})} =
\tilde{O}(\frac{1}{\epsilon} + 2^{O(\sqrt{\log{n}})})$.
\end{proof}

\subsection{An Improved Algorithm}
\label{section-improved-algorithm}
We now present an algorithm that is a variant of the previous algorithm and improves upon it. The time complexity of the improved algorithm approaches the lower bound (cf. Section \ref{sec:lower}).   The idea is to  use {\em controlled GHS} algorithm \cite{dnabook} to construct MST fragments of suitable size.  Then we apply the smoothing (with a smaller number of rounds compared to previous algorithm, i.e., $\tilde{O}(\frac{1}{\sqrt{\epsilon}})$ instead of $\tilde{O}(\frac{1}{\epsilon})$) to add an expander over the super-graph induced by the MST fragments where each super-node is one fragment (partition).  Then we compute the final MST  in a similar fashion to Theorem \ref{theorem:mst-upper-bound} on the super-graph.

\begin{theorem}
\label{theorem:mst-lower-bound}
  Given a weighted graph $G(V,E)$ in the $\epsilon$-smoothing model.
 Then  there exists a randomized distributed algorithm that finds an MST of $G$
  in time $\tilde{O}(\frac{1}{\sqrt{\epsilon}})2^{O(\sqrt{\log n})}$, w.h.p.
\end{theorem}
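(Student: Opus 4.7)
The plan is to combine the controlled GHS algorithm with smoothing, but to apply the smoothing analysis on a \emph{contracted} graph rather than on $G$ itself. The key observation is that if the MST has already been resolved inside fragments of diameter $\sim 1/\sqrt{\epsilon}$, then we only need an expander over the $k$ super-nodes (fragments), and creating such a super-expander requires far fewer smoothing rounds than creating an expander on all of $V$: each fragment has $\Theta(n/k)$ vertices, all of which contribute to its smoothed degree, so smoothing strength is amplified by a factor of roughly $n/k$.

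First, I would run controlled GHS (\cite{dnabook}) on $G$ for $\tilde{O}(1/\sqrt{\epsilon})$ rounds to produce MST fragments $F_1,\dots,F_k$ of diameter $\tilde{O}(1/\sqrt{\epsilon})$, with $k = \tilde{\Theta}(n\sqrt{\epsilon})$. At the same time, every node invokes the $\epsilon$-smoothing mechanism for $\ell = \Theta(\log n/\sqrt{\epsilon})$ rounds; by Lemma \ref{lemma:smooth-gnp} the smoothed edge set is distributionally $G(n,p)$ with $p = \Theta(\ell\epsilon/n) = \tilde{\Theta}(\sqrt{\epsilon}/n)$. Contract each $F_i$ to a super-node $v_i$ and let $\mathcal{S}$ be the multigraph on $\{v_1,\dots,v_k\}$ whose edges are the inter-fragment smoothed edges. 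The expected number of smoothed edges between any two super-nodes is $\approx |F_i|\,|F_j|\,p = \tilde{\Theta}(\log k/k)$, which is exactly the density of $G(k,\Theta(\log k/k))$; reusing the argument of Lemma \ref{lemma:conductance}, $\mathcal{S}$ is an expander with mixing time $O(\log n)$ w.h.p.

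Second, I would run the MST procedure from Theorem \ref{theorem:mst-upper-bound} on $\mathcal{S}$ to compute the MST edges crossing fragments; combined with the already-known intra-fragment trees, this is the MST of $G$. One logical routing step on $\mathcal{S}$ is implemented in $\tilde{O}(1/\sqrt{\epsilon})$ real rounds: intra-fragment aggregation and downcast along a fragment's BFS/MST tree cost $\tilde{O}(1/\sqrt{\epsilon})$ (using Lemmas \ref{lemma:k_aggregate_routing} and \ref{lemma:k_aggregate_routing_full} locally within each fragment), while the actual traversal of a smoothed inter-fragment edge is $O(1)$. Theorem \ref{theorem:mst-upper-bound} applied to $\mathcal{S}$ terminates in $O(\log k)\cdot 2^{O(\sqrt{\log n})}$ super-rounds, so the total cost is $\tilde{O}(1/\sqrt{\epsilon})\cdot 2^{O(\sqrt{\log n})}$ real rounds, as claimed.

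The hard part will be pinning down the quantitative expansion argument for $\mathcal{S}$: controlled GHS does not hand us perfectly balanced fragments, so one must carefully choose the diameter parameter so that (i) every super-node has enough expected smoothed degree for a Chernoff-plus-conductance bound to establish $O(\log n)$ mixing in $\mathcal{S}$, and (ii) the load induced on a single smoothed edge by the $k$-aggregate routing on $\mathcal{S}$, lifted to real edges of $G$, stays within the $O(\log n)$-bit-per-round budget of the $\mathcal{CONGEST}$ model. Once these two thresholds are verified, the remainder of the argument is essentially a reprise of Theorem \ref{theorem:mst-upper-bound} on the contracted graph.
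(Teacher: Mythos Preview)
Your proposal is correct and follows essentially the same approach as the paper: run controlled GHS for $\log(1/\sqrt{\epsilon})$ phases to obtain $O(n\sqrt{\epsilon})$ base fragments of size $\Omega(1/\sqrt{\epsilon})$ and diameter $O(1/\sqrt{\epsilon})$, run $\Theta(\log n/\sqrt{\epsilon})$ rounds of smoothing, contract to a super-graph on which the smoothed edges form a $G(n',p')$ with $p' \geq \Omega(\log n'/n')$, and then execute the Boruvka/aggregate-routing machinery of Theorem~\ref{theorem:mst-upper-bound} on the super-graph at a cost of $O(1/\sqrt{\epsilon})$ real rounds per super-round. Your two flagged difficulties are exactly the ones the paper glosses over: imbalance is handled by the \emph{lower} bound $|F_i|\ge \Omega(1/\sqrt{\epsilon})$ from controlled GHS (so $p' \ge (1/\sqrt{\epsilon})^2\, p$ suffices for the conductance argument), and congestion is absorbed by the $k$-aggregate routing of Lemmas~\ref{lemma:k_aggregate_routing}--\ref{lemma:k_aggregate_routing_full} together with pipelining inside each fragment tree.
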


\begin{proof}
  We give the algorithm along with its analysis, as follows.

    Run $O(\frac{\log n}{\sqrt{\epsilon}})$ rounds of smoothing. Denote by $S$ the set of smoothed edges generated. Using Lemma \ref{lemma:smooth-gnp},  the probability that a smoothed edge occurs between two nodes in $G$ is  $p = \Theta(\frac{\sqrt{\epsilon} \log n}{n})$.
    
     Run \emph{controlled GHS} \cite{dnabook} for $\log{\frac{1}{\sqrt{\epsilon}}}$ phases. This takes $O(\frac{\log{n}}{\sqrt{\epsilon}})$ rounds. Every cluster (each of which is an MST fragment) will have size $\Omega(\frac{1}{\sqrt{\epsilon}})$ and diameter $O(\frac{1}{\sqrt{\epsilon}})$, and there will be $O(n\sqrt{\epsilon})$ such clusters \cite[Section 7.4]{dnabook}. We call these clusters as {\em base fragments}.
     We note that communication within a cluster (i.e., between any node of the cluster and its leader) takes $O(\frac{1}{\sqrt{\epsilon}})$ rounds.
     
     View these clusters as a set of \emph{super-nodes}, denoted by $V'$. Let $E' \subset E$ be the set of inter-super-node edges, let $S' \subset S$ be the set of inter-super-node smoothed edges. Consider two super-graphs: $G'(V',E')$ and $R'(V',S')$. It is easy to show that due to the probability $p$ of the random edges introduced by the \emph{smoothing} process, the super-graph $R'(V',S')$  is an Erdős–Rényi random graph or a $G(n', p')$-random graph, where
    \begin{equation}
        n' = O(n \sqrt{\epsilon})
    \end{equation}
    and
    \begin{equation}
        p' \geq \Omega\left (\left (\frac{1}{\sqrt{\epsilon}} \right)^2\right) p = \Omega \left( \frac{\log{(n')}}{n'} \right)
    \end{equation}
    Thus, the super-graphs $G' \cup R'$ is equivalent to the smoothed graph of our model in Section \ref{sec:model}.

     Similar to the previous algorithm, we solve the MST problem using the Boruvka's  algorithm on the super-graph $G'$, using the routing structure of Ghaffari et al.\ \cite{Ghaffari_2017_PODC, Ghaffari_2018_DISC} and the aggregate routing over $R'$. To implement the algorithm on the super-graph, we pipeline messages within all the super-nodes, i.e., inside the base fragments. There are $O(\log n)$ phases of the Boruvka's algorithm and  each phase takes
    \begin{center}
        $O(\frac{1}{\sqrt{\epsilon}}) \cdot 2^{O(\sqrt{\log n})}$ rounds.
    \end{center}    
    The extra term of $O(\frac{1}{\sqrt{\epsilon}})$ is incurred by  communication within a super-node. 
Thus, in total, the second part  takes $O(\frac{1}{\sqrt{\epsilon}} \cdot \log{n}) \cdot 2^{O(\sqrt{\log n})}$ rounds.
    
    Combining the MST edges over the super-graph $G'$, and the MST edges in each super-node, we have the MST for the original graph. Therefore, the total time complexity of the algorithm is
    \[
    O(\frac{\log{n}}{\sqrt{\epsilon}})  +  O(\frac{\log{n}}{\sqrt{\epsilon}}) \cdot 2^{O(\sqrt{\log n})}
    =
      \tilde{O}(\frac{1}{\sqrt{\epsilon}}) \cdot 2^{O(\sqrt{\log n})}.
      \]

\end{proof}

\section{Lower Bound}
\label{sec:lower}
In this section, we show the following lower bound
result on the $\epsilon$-smoothing model. We note
that $\tilde{\Omega}(D+\sqrt{n})$ is an unconditional lower bound (without smoothing) that holds even for randomized Monte-Carlo approximate MST algorithms\cite{Sarma_2012}.

\begin{theorem}[Smooth MST Lower Bound]
\label{theorem:smooth-lower-bound}
    There exists a family of graphs $\mathcal{G}$, such that, under the $\epsilon$-smoothing model, any distributed MST algorithm must incur a running time of $\tilde{\Omega}(\frac{1}{\sqrt{\epsilon}})$, in expectation.
\end{theorem}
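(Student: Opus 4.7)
The plan is to adapt the standard two-party communication complexity reduction for distributed MST due to Das Sarma et al.\ \cite{Sarma_2012} to the $\epsilon$-smoothing model. I would exhibit a family $\mathcal{G}$ of weighted graphs on $n$ nodes, together with a balanced vertex bipartition $V = A \cup B$ with $|A|=|B|=\Theta(n)$, such that (i) the number of edges of $G$ crossing the cut $(A,B)$ is only $\polylog(n)$, and (ii) any correct distributed MST algorithm forces Alice (simulating $A$) and Bob (simulating $B$) to solve a communication problem whose complexity is $\tilde\Omega(n)$ bits. The ``Spider''/path-of-cliques construction of \cite{Sarma_2012} already has properties (i)--(ii); the main new content is controlling the additional cross-cut bandwidth the smoothing grants the algorithm.

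Next I would bound the total information that can traverse the cut $(A,B)$ in $T$ rounds under $\epsilon$-smoothing. The original $\polylog(n)$ cut edges carry only $\tilde O(T)$ bits. For the smoothed edges, each node independently adds, with probability $\epsilon$ per round, an edge to a uniformly random node; the probability that such an added edge crosses $(A,B)$ is $\Theta(1)$, so in expectation $\Theta(n\epsilon)$ new crossing edges appear per round. A crossing edge born in round $t \le T$ contributes at most $O((T-t)\log n)$ bits of cross-cut bandwidth thereafter. Summing yields
\[
\sum_{t=1}^{T}\Theta(n\epsilon)\cdot O((T-t)\log n) \;=\; \tilde O(n T^2 \epsilon)
\]
bits transmitted across the cut by smoothed edges in expectation; standard Chernoff/Azuma concentration promotes this to a whp bound.

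Putting the pieces together via a round-by-round simulation in which Alice and Bob use shared randomness to sample the smoothing, the entire Alice--Bob transcript produced during $T$ rounds has length $\tilde O(T + n T^2 \epsilon)$. Since this transcript must resolve a problem of complexity $\tilde\Omega(n)$, we need $T + n T^2 \epsilon = \tilde\Omega(n)$. In the regime of interest (when the claim $T \le \tilde{O}(\sqrt{n})$ is not already implied by the unconditional $\tilde\Omega(D+\sqrt{n})$ bound), the quadratic term is the binding constraint, giving $T = \tilde\Omega(1/\sqrt{\epsilon})$, which (together with the unconditional bound) yields $\tilde\Omega(\min\{1/\sqrt{\epsilon},\,D+\sqrt{n}\})$ in expectation.

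The main obstacle is the adaptive, random nature of smoothed edges: which random partner a node receives may be correlated with the algorithm's state, and the endpoints are drawn uniformly over all of $V$. The cleanest way to handle this is to place the random bits that determine each node's smoothed choice into the shared randomness between Alice and Bob, so that each newly born cross-cut smoothed edge is simply an extra $O(\log n)$-bit channel between them whose existence and birth time are algorithm-independent and controlled by the expectation above. Extra care is needed so that messages sent along smoothed edges are properly billed to the Alice--Bob transcript rather than to ``free'' local computation; this is handled exactly as in the round-by-round CONGEST simulation arguments of \cite{Sarma_2012}, with the smoothed edges simply contributing additional, but counted, channels across the cut.
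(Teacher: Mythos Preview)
Your proposal has a genuine gap in the underlying construction. You assert that the Das Sarma et al.\ graph admits a \emph{fixed} balanced bipartition $V=A\cup B$ with only $\polylog(n)$ crossing edges and a $\tilde\Omega(n)$-bit communication lower bound for MST across that cut. Neither holds. In that construction there are $\Theta(\sqrt{n})$ parallel paths, so any balanced vertical cut severs $\Theta(\sqrt{n})$ path edges, not $\polylog(n)$; and the embedded set-disjointness instance has only $k=\Theta(\sqrt{n})$ bits, so the communication lower bound is $\Omega(\sqrt{n})$, not $\tilde\Omega(n)$. More fundamentally, properties (i) and (ii) cannot coexist in any low-diameter graph: the $\tilde O(D+\sqrt n)$-round MST algorithm sends at most $\tilde O\bigl(c\,(D+\sqrt n)\bigr)$ bits across any $c$-edge cut, so with $c=\polylog(n)$ and $D=\polylog(n)$ at most $\tilde O(\sqrt n)$ bits ever need to cross. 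Your inequality $T+nT^2\epsilon=\tilde\Omega(n)$ therefore rests on an unrealizable premise; the fact that it still yields $1/\sqrt{\epsilon}$ is an arithmetic coincidence (both sides of your balance scale linearly with $n$, and the $n$'s cancel).

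What the paper actually does---and what the Das Sarma et al.\ reduction requires---is a \emph{moving} cut, not a fixed one. In round $i$ Bob maintains only the region $R_i$ (columns $\ge i$), and Alice sends him just the messages crossing from $C_i=V\setminus R_i$ into $R_i$. Because the cut advances one column per round, the $\Theta(\sqrt n)$ path edges are absorbed rather than crossed; only the $O(\log n)$ binary-tree edges cross in any round, which is precisely why the non-smoothed simulation costs $\tilde O(T)$ bits against a $\Theta(\sqrt n)$-bit set-disjointness budget. Under smoothing, the relevant count is the number of smoothed edges with one endpoint in $C_i$ (of size $i\,\Theta(\sqrt n)$, not $\Theta(n)$) and the other in $R_i$; summing over $i\le T$ gives $\Theta(\epsilon\,T^2\sqrt n)$ extra messages, and requiring $\epsilon\,T^2\sqrt n + T\log n \le \tilde O(\sqrt n)$ yields $T=\tilde\Omega(1/\sqrt{\epsilon})$. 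Your shared-randomness treatment of the smoothed edges is fine, but it must be coupled with this round-indexed Simulation-Theorem cut rather than a static bipartition.
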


We will prove the lower bound theorem by using the technique used in \cite{Sarma_2012}. First, we will briefly recall the lower bound poof of $\tilde{\Omega}(\sqrt{n})$ (we assume $D= O(\log n)$) without smoothing. For purposes of exposition, we simplify and slightly modify the technique in the mentioned paper, to show only the bound for exact distributed MST. Then we extend it to the smoothing model. The procedure is to establish a chain of algorithm reductions which is the same as  in \cite{Sarma_2012}, such that it relates distributed MST to a problem with a known lower bound.
The following are the chain of reductions:
\begin{itemize}
\item Set Disjointness (SD) to Distributed Set Disjointness (DSD). We first reduce the   set disjointness (SE) verification problem, a standard well-studied problem  in two-party  communication complexity to the problem of distributed set disjointness (DSD) verification. In the set disjointness problem (SD), we have two parties Alice and Bob,
who each have a $k$-bit string --- $x= (x_1,x_2,\dots,x_k)$
and $y = (y_1,y_2,\dots, y_k)$ respectively.  The goal
is to verify if the set disjointness function is defined
to be one if the inner product  $\langle x, y\rangle$ is $0$ (i.e., there is no $i$ such that $x_i=y_i=1$) and zero otherwise. The goal is to solve SD by {\em communicating
as few bits} as possible between Alice and Bob.

In the distributed set disjointness (DSD) verification, the goal is to solve SD in a given input graph $G =(V,E)$, where two distinguished nodes $s, t \in V$ have the bit vectors $x$ and $y$ respectively. In other words, instead of communicating directly as in the two-party problem (between Alice and Bob), the two nodes
$s$ and $t$ (standing respectively for Alice and Bob) have to communicate via the edges of $G$ (in the CONGEST model) to solve SD. The goal is to solve the DSD problem 
using {\em as few rounds} as possible in the CONGEST model (where only $O(\log n)$ bits per edge per round are allowed).
    \item Reduction of Distributed set disjointness (DSD) to {\em connected spanning subgraph (CSS) verification}. In the CSS problem, we want to solve a {\em graph verification} problem which can be defined as follows.  In distributed graph verification, we want to efficiently check
whether a given subgraph of a network has a specified property via a distributed algorithm.
Formally, given a  graph $G=(V,E)$,  a subgraph $H=(V,E')$ with $E'\subseteq E$, and a predicate $\Pi$,
it is required to decide whether $H$ satisfies $\Pi$ (i.e., when the algorithm terminates, every node knows whether $H$ satisfies $\Pi$).  The predicate $\Pi$ may specify statements such as ``$H$ is connected'' or ``$H$ is a spanning tree" or ``$H$ contains a cycle''. Each vertex in $G$ knows which of its incident edges (if any) belong to $H$. 

In the connected spanning subgraph (CSS) verification the goal is to
verify whether the given subgraph $H$ is connected and spans all nodes of $G$, i.e., every node in $G$ is incident to some edge in $H$. The goal is to solve
CCS in as few rounds as possible (in the CONGEST model).

    \item The last reduction is to  reduce CCS  verification to computing an MST. 
\end{itemize}

The last two reductions above are done exactly 
as in the paper of \cite{Sarma_2012} that show
the time lower bound of $\tilde{\Omega}(D+\sqrt{n})$ rounds; however, the
first reduction from SD to DSD is different in the smoothing model, as the
input graph used in the DSD  can use the (additional) power of the smoothing model. 

We will first briefly discuss the reductions as in 
\cite{Sarma_2012} and then discuss how to modify the first reduction to work for the smoothing model.
Here we first state
the bounds that we obtain via these reductions, and refer to \cite{Sarma_2012} for the details. The well-known communication complexity lower bound for the SD problem
is $\Omega(k)$ (see e.g.,\cite{Kushilevitz_1997_Book}), where $k$ is the length of the bit vector
of Alice and Bob. This lower bound holds even for randomized Monte-Carlo algorithms and even under {\em shared} public coin.

Due to the graph topology used in the DSD problem
(see Figure \ref{fig:lbgraph} without the smoothing edges) the value of  $k$ is to be $\Theta(\sqrt{n})$ (which is the best possible).
The reduction from SD to DSD shows that the lower
bound for DSD, is $\tilde{\Omega}(\sqrt{n})$ {\em rounds}.
(Note that the diameter of the lower bound graph is
$O(\log n)$ and hence subsumed.) This reduction
uses the {\em Simulation Theorem} (cf. Theorem 3.1
in \cite{Sarma_2012}) which is explained later below.

 The reduction from DSD to CCS 
 shows that the same time lower bound of $\tilde{\Omega}(\sqrt{n})$  rounds holds for the CCS problem. This reduction shows that the given subgraph $H$ in the  CCS problem is spanning connected if and only if
 the input vectors $x$ and $y$ are disjoint.

The reduction from CCS to MST problem shows that
the time lower bound for CCS verification which is
$\tilde{\Omega}(\sqrt{n})$ also holds for the MST problem.
This reduction takes as input the CCS problem and assigns weight 1 to the edges in the subgraph $H$ and weight $n$
to all other edges in $G$. It is easy to show
that $H$ is spanning connected if and only if the weight
of the MST is less than $n$. Hence the same time lower
bound that holds for CCS also holds for MST.


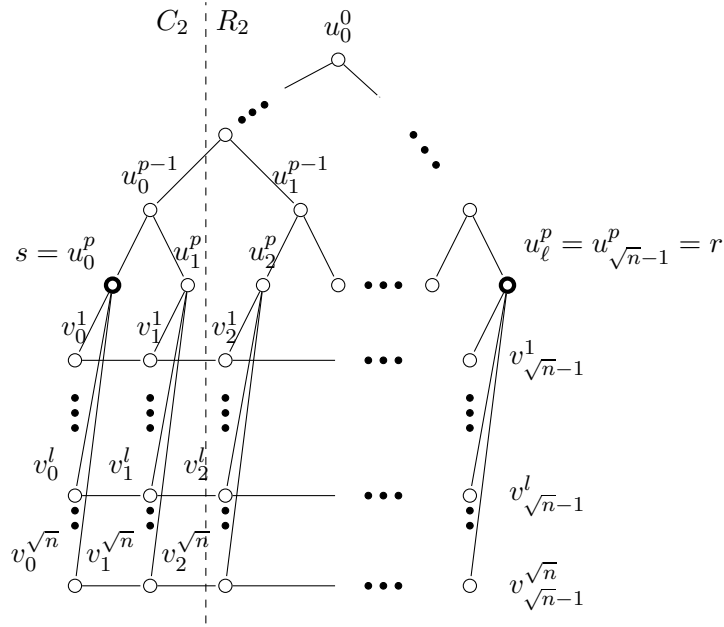
\begin{figure*} 
\centering
\begin{tikzpicture}[shorten >= 1 pt, -]
\tikzstyle{hollow}=[circle,draw,fill=white,minimum size=5pt,inner sep=0pt]
\tikzstyle{ring}=[circle,draw,ultra thick,fill=white,minimum size=5pt,inner sep=0pt]
\tikzstyle{dots}=[circle,draw,fill=black,minimum size=2.5pt, inner sep=0pt]
\tikzstyle{lineend}=[circle,draw,minimum size=0.001pt,inner sep=0pt]

\node (u00) at (0,0) [hollow,label=above:$u_0^0$]{};
\node (arbit1) at (-0.75,-0.4) [lineend]{};
\node (ucl1) at (-0.98,-0.6) [dots]{};
\node (ucl2) at (-1.14,-0.7) [dots]{};
\node (ucl3) at (-1.28,-0.8) [dots]{};
\node (arbit2) at (0.55,-0.5) [lineend]{};
\node (ucr1) at (1.0,-1.0)[dots]{};
\node (ucr2) at (1.15,-1.2)[dots]{};
\node (ucr3) at (1.3,-1.4)[dots]{};
\node (up2some) at (-1.5,-1) [hollow]{};
\node (up10) at (-2.5,-2) [hollow,label=above:$u_0^{p-1}$]{};
\node (up11) at (-0.5,-2) [hollow,label=above:$u_1^{p-1}$]{};
\node (up1some) at (1.75,-2) [hollow]{};
\node (up0) at (-3,-3) [ring,label=above left:{$s=u_0^p$}]{};
\node (up1) at (-2,-3) [hollow,label=above:$u_1^p$]{};
\node (up2)	at (-1,-3) [hollow,label=above:$u_2^p$]{};
\node (up3) at (0,-3) [hollow]{};
\node (upc1) at (0.4,-3) [dots]{};
\node (upc2) at (0.6,-3) [dots]{};
\node (upc3) at (0.8,-3) [dots]{};
\node (updp2) at (1.25,-3) [hollow]{};
\node (updp1) at (2.25,-3) [ring,label=above right:{$u^p_{\ell} = u^p_{\sqrt{n} - 1} = r$}]{};
\node (v10) at (-3.5,-4) [hollow,label=above:$v_0^1$]{};
\node (v11) at (-2.5,-4) [hollow,label=above:$v_1^1$]{};
\node (v12) at (-1.5,-4) [hollow,label=above:$v_2^1$]{};
\node (arbit3) at (0,-4) [lineend]{};
\node (v1dp1) at (1.75,-4) [hollow]{};
\node (v1dp1name) at (2,-4) [label=right:$v^1_{\sqrt{n} - 1}$]{};
\node (v1c1) at (0.4,-4) [dots]{};
\node (v1c2) at (0.6,-4) [dots]{};
\node (v1c3) at (0.8,-4) [dots]{};
\node (v2c1) at (-3.5,-4.5) [dots]{};
\node (v2c2) at (-2.5,-4.5) [dots]{};
\node (v2c3) at (-1.5,-4.5) [dots]{};
\node (v2cdp1) at (1.75,-4.5) [dots]{};
\node (v3c1) at (-3.5,-4.7) [dots]{};
\node (v3c2) at (-2.5,-4.7) [dots]{};
\node (v3c3) at (-1.5,-4.7) [dots]{};
\node (v3cdp1) at (1.75,-4.7) [dots]{};
\node (v4c1) at (-3.5,-4.9) [dots]{};
\node (v4c2) at (-2.5,-4.9) [dots]{};
\node (v4c3) at (-1.5,-4.9) [dots]{};
\node (v4cdp1) at (1.75,-4.9) [dots]{};
\node (vl0) at (-3.5,-5.8) [hollow,label=above left:$v_0^l$]{};
\node (vl1) at (-2.5,-5.8) [hollow,label=above left:$v_1^l$]{};
\node (vl2) at (-1.5,-5.8) [hollow,label=above left:$v_2^l$]{};
\node (arbit4) at (0,-5.8) [lineend]{};
\node (vlc1) at (0.4,-5.8) [dots]{};
\node (vlc2) at (0.6,-5.8) [dots]{};
\node (vlc3) at (0.8,-5.8) [dots]{};
\node (vldp1) at (1.75,-5.8) [hollow]{};
\node (vldp1name) at (2,-5.8) [label=right:$v^l_{\sqrt{n} - 1}$]{};
\node (vmc1) at (-3.5,-6) [dots]{};
\node (vmc2) at (-2.5,-6) [dots]{};
\node (vmc3) at (-1.5,-6) [dots]{};
\node (vmcdp1) at (1.75,-6) [dots]{};
\node (vnc1) at (-3.5,-6.2) [dots]{};
\node (vnc2) at (-2.5,-6.2) [dots]{};
\node (vnc3) at (-1.5,-6.2) [dots]{};
\node (vncdp1) at (1.75,-6.2) [dots]{};
\node (vT0) at (-3.5,-7) [hollow,label=above left:$v_0^{\sqrt{n}}$]{};
\node (vT1) at (-2.5,-7) [hollow,label=above left:$v_1^{\sqrt{n}}$]{};
\node (vT2) at (-1.5,-7) [hollow,label=above left:$v_2^{\sqrt{n}}$]{};
\node (arbit5) at (0,-7) [lineend]{};
\node (vTc1) at (0.4,-7) [dots]{};
\node (vTc2) at (0.6,-7) [dots]{};
\node (vTc3) at (0.8,-7) [dots]{};
\node (vTdp1) at (1.75,-7) [hollow]{};
\node (vTdp1name) at (2,-7) [label=right:$v^{\sqrt{n}}_{\sqrt{n} - 1}$]{};

\foreach \from/\to in {u00/arbit1,u00/arbit2,up2some/up10,up2some/up11,up10/up0,up10/up1,up11/up2,up11/up3,up1some/updp2,up1some/updp1,up0/v10,up1/v11,up2/v12,updp1/v1dp1,v10/v11,v11/v12,v12/arbit3,up0/vl0,up1/vl1,up2/vl2,updp1/vldp1,vl0/vl1,vl1/vl2,vl2/arbit4,up0/vT0,up1/vT1,up2/vT2,updp1/vTdp1,vT0/vT1,vT1/vT2,vT2/arbit5}
\draw [-, thin] (\from)--(\to);

\draw [dashed] (-1.76, -7.5)--(-1.76, 0.8);
\node[] at (-1.4, 0.5) {$R_2$};
\node[] at (-2.2, 0.5) {$C_2$};

\end{tikzpicture}
\caption{The lower bound graph.}
\label{fig:lbgraph}
\end{figure*}

\subsection{Reduction of SD to DSD: The Simulation Theorem}

In this section, we explain the key reduction from
SD to DSD which uses the {\em Simulation Theorem}  as in \cite{Sarma_2012}.
The reduction idea is as follows. Assuming that we have an algorithm for DSD that finished in $r$ rounds, Alice and Bob will simulate this algorithm in the two party model
by sending as few bits as possible. The Simulation theorem
accomplishes this. 
The Simulation theorem in \cite{Sarma_2012} et al that shows the simulation in the 
standard graph (without smoothing) uses constant number
of bits per round to do the simulation. Thus if the DSD algorithm finishes in $r$ rounds, then the Alice and Bob would have solved SD by exchanging $O(r)$ bits. If $r = o(k)$ (where $k$ is the length of the input bit string),
then this will contradict the lower bound of set disjointness in the two party model which is $\Omega(k)$
(even for randomized algorithms).

We note that the reduction is similar to that in \cite{Sarma_2012} as the input graph $G(x,y)$  for the DSD is the same (see Figure \ref{fig:lbgraph}).  However, in the smoothing model,
the algorithm has the additional power of using
the smoothing edges and hence the lower bound will be smaller as we will show below.
We  will first briefly describe the idea
behind the Simulation Theorem as it applicable to $G$
without the smoothing.  The  lower bound
graph $G(x,y)$ used in the Simulation Theorem is as follows.
Note that
the input graph $G$ for DSD has two distinguished nodes $s$ and $t$ that have the inputs $x$ and $y$ (corresponding to  Alice and Bob) respectively.

\subsubsection{The lower bound graph (family) for MST}

$G(x,y)$ is depicted as in Figure \ref{fig:lbgraph}, where $|x| = |y| = \sqrt{n}$. $G(.)$ includes $\sqrt{n}$ paths, and a {\em full binary tree} (to reduce the diameter to $O(\log n)$). Each {\em path} has length of $\sqrt{n}$; these are called the {\em path edges}. The full binary tree has $\sqrt{n}$ leaves, and hence, has the height of $p = \frac{\log{n}}{2}$. We number the leaves and the path nodes from left to right --- $0,1, \dots, \ell$.
Note that leaf numbered 0  is node $s$ and leaf numbered $\ell$ is $t$.   Consider each leaf $0< j < \ell$, let it connect to all the nodes $j$ in all paths, we call these {\em spoke} edges. Note that the binary tree edges, the 
path edges, and the spoke edges from leaf nodes other
than $s$ and $t$ are present in every graph
of the family.

It is straightforward that $G(x,y)$ has $\Theta(n)$ nodes, and diameter $D = \Theta(\log{n})$.

In the reduction from SD to DSD, Alice and Bob wants
to solve SD problem with Alice having input vector
$x =(x_1,x_2,\dots, x_k)$ and Bob having input
vector $y=(y_1,y_2,\dots, y_k)$ where we fix $k = \sqrt{n}$. Depending on $x$ and $y$, Alice
and Bob will fix the spoke edges from nodes $s$ and
$t$. For the SD problem, Alice  will add edges from $s$ to
the first node in path $i$ if and only if $x_i =0$. Similarly Bob will add edges from $t$ to the last node in the path $i$ if and only if $y_i$ is 0.

Now we are ready to describe the Simulation Theorem
which really gives an algorithm for Alice and Bob
to simulate any given algorithm for DSD problem.
If the DSD algorithm runs in $r$ rounds, the Simulation theorem will show how to solve the SD problem by exchanging $O(r)$ bits. We sketch the main idea here, which is quite simple, and leave the
 full details which can be found in \cite{Sarma_2012}.

How will Alice and Bob start the simulation? Note
that they want to solve the SD problem on their respective
inputs $x$ and $y$ in the two-party model. They will then use their respective inputs to construct $G(x,y)$ as described above. Note that Alice will be able to construct
all edges of $G$ except the spoke edges of $t$ (since that depends on Bob's input)  and vice versa for Bob. Then, assuming that there is an algorithm for the DSD problem on the same inputs that runs in $r$ rounds, Alice and Bob will simulate the DSD algorithm whose output will also give the output for the SD problem (by definition).

The main idea is for Alice and Bob to keep the simulation going as long as possible. If one {\em disregards the binary tree edges}, all paths are of length $\ell = \Theta(\sqrt{n})$
and hence it is easy to keep the simulation going
for $\ell/2$ rounds (say). This is because, Alice has all the information needed to simulate the DSD algorithm till
$\ell/2$ steps (i.e., the middle
of the path). Why? Because Alice knows her own input
and all other nodes in $G$ does not have any input. She does not know Bob's input, but does not matter for $\ell/2$ steps since in so many rounds nothing from
Bob's part of the graph reaches the ``middle" of the path.
But of course, the above is not true because of the binary tree edges which has smaller diameter. So to keep the simulation going for Bob, Alice sends
the minimum amount of information needed by him. Note that after $i$ rounds
the computation from Alice's side (which are the set
of nodes numbered $u_0^p$ and $v_0^1, v_0^2, \dots, v_0^{\sqrt{n}}$ will have reached nodes at distance $i$ on the path.

We define the $R_i$ (intuitively $i$-Right) set as follows: $R_i$ includes all nodes on the paths with subscript $j \geq i$, all leaf nodes $u_j^p$ where $j \geq i$, and all ancestor of these leaves, see Figure \ref{fig:lbgraph}.

In round $i$, Bob needs to keep the correct computation for $R_i$. To achieve that, Alice sends
{\em only the messages sent by the tree nodes in $R_{i-1}$ crossing into the tree nodes in $R_i$} (see Lemma 3.4 in \cite{Sarma_2012}). (A similar observation applies for Alice to do her simulation).
Hence only messages sent by at most $O(\log n)$ nodes in the binary tree are needed. Hence in every round
at most $O(\log^2 n)$ bits need to be exchanged by Alice and Bob to keep the simulation going. Thus if 
the DSD algorithm finishes in $o(\ell/\log^2n) = o(\sqrt{n}/\log^2 n)$
rounds, then the simulation will also end successfully.

Thus we can show the following Simulation Theorem.

\begin{theorem}[Simulation Theorem] (Simplified version of Theorem 3.1 in \cite{Sarma_2012})
\label{theorem:base-simulation}
Given the DSD problem with input size $\Theta(\sqrt{n})$, encoded as $G(x,y)$ (i.e., the $x$ and $y$ are bit vectors of length
$\Theta(\sqrt{n})$), if there is a distributed algorithm  that solves the DSD problem in time at most $T$ rounds, using messages of size $O(\log n)$. Then there is an algorithm in the two-party communication complexity model that decides SD problem while exchanging at most $O(T\log^2{n})$ bits.
\end{theorem}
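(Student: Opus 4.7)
The plan is to exhibit an explicit protocol by which Alice and Bob, holding inputs $x$ and $y$ respectively, simulate a supposed $T$-round distributed algorithm $\mathcal{A}$ for DSD on the graph $G(x,y)$ described in Section 4.1, and then bound the bit complexity of this simulation by $O(T\log^2 n)$. Since a $\Theta(\sqrt n)$-bit set-disjointness instance can be encoded in $G(x,y)$ via the spoke edges at $s$ and $t$, any nontrivial bound on $T$ will follow by composing this reduction with the standard $\Omega(\sqrt n)$ communication lower bound for SD.

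First I would define, for each round $i \in \{0,1,\dots,T\}$, a \emph{left set} $L_i$ and a \emph{right set} $R_i$ of vertices of $G(x,y)$ as in the excerpt: $R_i$ contains all path vertices with index $\geq i$, the leaves $u_j^p$ with $j \geq i$, and their ancestors in the binary tree, while $L_i$ is the analogous set on Alice's side. The invariant that Alice and Bob will maintain is that, after round $i$ of the simulated algorithm, Alice knows the full state (local memory and incoming messages) of every node in $L_i$, and Bob knows the full state of every node in $R_i$. Initially ($i=0$) this invariant holds trivially because Alice knows $x$ and all non-spoke-at-$t$ edges of $G(x,y)$, and similarly for Bob; she can therefore compute the initial state of every node except the one bit of spoke-edge information attached to $t$, which belongs to $R_0$.

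The inductive step is the heart of the argument. To go from round $i-1$ to round $i$, each party can locally advance the state of every node in its side whose \emph{entire} round-$i$ input (the messages received from all neighbors in round $i-1$) is already known. Along the paths, neighboring states shift by only one index per round, so the invariant propagates for free into $L_i$ and $R_i$ on the path portion. The only messages that cross the $L_i / R_i$ cut in one round are those traveling along tree edges from a node in $R_{i-1}\setminus R_i$ (an ancestor or leaf on Alice's side of the cut in the tree) to a node in $R_i$ (and symmetrically for Alice). There are at most $O(\log n)$ such crossing tree edges per round, because the cut in the binary tree separates an initial prefix of leaves from the rest and cuts at most one edge per level of the height-$\frac{\log n}{2}$ tree. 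In each round Alice sends Bob the $O(\log n)$-bit content of each such crossing message and vice versa, for a total of $O(\log^2 n)$ bits per round.

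Summing over the $T$ rounds gives $O(T\log^2 n)$ bits of total communication, after which Alice and Bob both hold the output of $\mathcal{A}$ (which at least one designated node — say $s$ — learns, and the same messages that let Alice track $L_i$ let her recover this output). The main technical obstacle, and the one I would be most careful about, is verifying the cut lemma precisely: namely, that the messages crossing the $L_i/R_i$ cut in the binary tree in a single round are indeed limited to $O(\log n)$ tree edges, and that path-edge crossings never need to be transmitted because the receiving endpoint's state on the other side of the cut is already known to the appropriate party by the inductive hypothesis. Once that combinatorial claim is in hand, the bit accounting and the conclusion of the theorem are immediate.
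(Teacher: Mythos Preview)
Your proposal is correct and follows essentially the same approach as the paper's sketch: define the shrinking sets $R_i$ (and symmetrically $L_i$), maintain by induction that each party knows the full state on its side after round $i$, and observe that only the $O(\log n)$ tree-edge messages crossing the cut need to be exchanged per round, for $O(T\log^2 n)$ bits total. One small slip: the messages Bob actually needs from Alice in round $i$ are those originating in $V\setminus R_{i-1}$ (the part Bob does \emph{not} know) and entering $R_i$, not those from $R_{i-1}\setminus R_i$ as you wrote---Bob already knows the round-$(i-1)$ states of $R_{i-1}\setminus R_i$ and can compute those messages himself---but the $O(\log n)$ crossing-edge bound and the rest of the argument are unaffected.
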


\subsection{Lower bound with smoothing}
Under the $\epsilon$-smoothing model, we will use the same lower bound graph $G(x,y)$. However, the smoothing model gives the algorithm additional power to add random edges in $G(x,y)$ during the course of the simulation. 
We will show how to modify the Simulation Theorem to apply
to the smoothing model. Naturally, since the algorithm has additional power, it can finish faster, and hence
the corresponding lower bound will be smaller.

We focus on Bob and show how he can keep his simulation going. (A similar argument applies for Alice.)
We consider how Bob maintains correct states for 
$R_i$ in round $i$. 
Beside the required messages as discussed in  Theorem \ref{theorem:base-simulation}, Bob needs to know the messages sent over (potential) {\em smoothed edges} crossing $V \setminus R_i$ into $R_i$. Since there are more messages to keep track, Bob cannot keep the simulation longer than that of the non-smoothing case.  Let $\delta$ be the number of rounds the simulation is valid without exceeding $\Theta(\sqrt{n})$ bits of communication. Since the smoothing process is randomized, we bound $\delta$ in expectation.

We will use a  pessimistic estimation to estimate
the number of smoothed edges that ``affect" Bob from
Alice's side. Let $C_i = V \setminus R_i$, let $S_i$ be the expected number of smoothed edges crossing $C_i$ and $R_i$. $S_{i}$ indicates the number of extra messages Bob needs to know, in round $i$. Notice that $|C_i| = i \Theta(\sqrt{n})$. 

\begin{equation}
    S_{i} = 2 \epsilon i \Theta(\sqrt{n})\frac{(n - i \Theta(\sqrt{n}))}{n}
\end{equation}

Since $\delta < \Theta(\sqrt{n})$, for every round $i$, $S_i = \Theta(\epsilon\delta\sqrt{n})$. After $\delta$ rounds, the expected number of messages over the smoothed edges is thus: $\Theta(\epsilon \delta^2 \sqrt{n})$. Also, by Theorem \ref{theorem:base-simulation}, we need to keep track of $(\delta \log{n})$ messages. To stay within the budget of $\sqrt{n}$ bits for DSD communication, we require:
\begin{equation*}
    \epsilon \delta^2 \Theta(\sqrt{n}) + \delta \Theta(\log{n}) \leq \frac{\sqrt{n}}{B}.
\end{equation*}

With $B = \Theta(\log{n})$ (the message size), we have $\delta \leq\Theta(\frac{1}{\sqrt{\epsilon \log{n}}}) $. 
Thus, we can keep the simulation going for up to
$ \Theta(\frac{1}{\sqrt{\epsilon \log{n}}})$ rounds.

To complete the lower bound argument, the same
lower bound applies for MST  in the $\epsilon$-smoothing model by the chain of reductions.


\section{Other Smoothing Models} \label{section-other-smoothing-models}

In this section, we discuss some of the other plausible smoothing models for the distributed MST problem. 
The most natural smoothing model that comes into the mind first in respect to a numerical-valued computing problem and that is similar in spirit to the original Spielman-Teng smoothing \cite{Spielman_2004} is where one ``perturbs'' the edge-weights in the given input graph. However, as we have already noted (see Section \ref{section-introduction}) this may not make for anything interesting if the perturbations are too small with respect to the original edge-weights.

One is therefore tempted to make the perturbations \emph{large} relative to the values of the original edge-weights. This, however, (depending on exactly how many edges are thus smoothed) may have the effect of essentially producing an input graph where the edge-weights are \emph{randomly distributed}, i.e., where every edge-weight is chosen uniformly at random from a certain range of values. As has been shown previously \cite{Khan_2008}, an MST can be constructed for such an input graph rather fast (in $\tilde{O}(D)$ rounds, actually, where $D$ is the diameter of the input graph).\footnote{The number of edges whose weights are smoothed can be parameterized, however. For example, one can consider a smoothing model where each edge in the original input graph is smoothed (i.e., its weight is perturbed) with a certain probability $\epsilon$, say, for some small $\epsilon$ (e.g., $\epsilon$ can be made $\tilde{\Theta}(\frac{1}{\sqrt{n}})$).} Thus, while this model with large perturbations in edge-weights is more interesting than the smoothing model with small perturbations in edge-weights, it has been somewhat --- even if not fully --- explored in the distributed MST literature.

These considerations motivate us to explore an alternate avenue where the \emph{graph topology} rather than the edge-weights are perturbed. In particular, we consider smoothing models where the perturbation process adds more edges to the original input graphs.
Below we describe two such models where additional edges characterize the perturbed graph compared to the original input graph. We call these models the \emph{$k$-smoothing models}. We distinguish these two models based on whether or not the smoothed edges are known to the algorithm.

\begin{remark}
    The usual smoothing concept (e.g., as originally put forth by Spielman and Teng \cite{Spielman_2004}, and as recently applied in the context of dynamic networks \cite{Dinitz_2018}) dictates that the new, ``perturbed'' graph be chosen according to some distribution (usually the uniform distribution) from a set of graphs that are ``close'' to the original input graph according to some distance metric. Thus the new, perturbed graph is presented to the algorithm as a whole and the algorithm has no way of knowing which edges are additional and which edges are original.

    Our $\epsilon$-smoothing model, however, focused on the other scenario (where the algorithm does have such knowledge)  because (1) it is algorithmically easier to approach and (2) also it may be relevant in somewhat different contexts too (e.g., in the context of \emph{congested clique} as discussed in Section \ref{section-introduction}).
\end{remark}
\subsection{The $k$-smoothing model with known smoothed edges}

The main model that we follow in this paper, the $\epsilon$-smoothing model (see Section \ref{section-computing-model-main-smoothing-model}) adds smooth edges by node-local computation during the course of an algorithm. We can look at models where smooth edges are added all at once, by some external process, prior to the commencement of the algorithm.

\begin{enumerate}

    \item Consider a smoothing model $\mathcal{M}(k, *)$ where the ``perturbation'' process \emph{adds}
    $k$ additional edges to the original input graph. In our notation, $*$ denotes the fact that the smoothed (i.e., additional) edges are known to the algorithm.

    These $k$ additional edges are chosen uniformly at random from all the $\binom{n}{2}$ possible edges of the graph. 
  
    \item Consider another smoothing model $\mathcal{M}(\delta, *)$ where the ``perturbation'' process \emph{adds} --- independently for each of the $\binom{n}{2}$ possible edges --- an edge with probability $\delta$.

    
\end{enumerate}

\begin{remark}
    It is not difficult to see that the model $\mathcal{M}(\epsilon, *)$ is essentially equivalent to the  model $\mathcal{M}(k, *)$ for the case when $\epsilon = \frac{k}{n}$.
\end{remark}

\begin{remark}
    By Lemma \ref{lemma:smooth-gnp}, consider the $\epsilon$-smoothing model with $\ell$ rounds of smoothing, then: $\mathcal{M}(k=2\ell\epsilon n, *)$ and $\mathcal{M}(\delta=\frac{2\ell\epsilon}{n}, *)$ are the equivalent models.
\end{remark}
\subsection{The $k$-smoothing model with unknown smoothed edges}

Essentially, we have the counterparts of $\mathcal{M}(k, *)$ and $\mathcal{M}(\delta, *)$ for the particular case when the smoothed edges are \emph{not} known to the algorithm. We denoted these models $\mathcal{M}(k, \times)$ and $\mathcal{M}(\delta, \times)$.

    We also note that both $\mathcal{M}(k, \times)$ and $\mathcal{M}(\epsilon, \times)$ are essentially equivalent to the smoothing model proposed by Dinitz et al.\ \cite{Dinitz_2018}, where the new, perturbed input graph is chosen uniformly at random from the set of all possible graphs whose edge-sets are at a (positive) edit-distance $k$ from the original input graph. The only subtle difference is that, in their model \cite{Dinitz_2018}, the edit-distance can be positive as well as negative. We, however, consider only \emph{positive} edit-distances here.

We note that the algorithms specified in this paper do not (at least directly) work when the smoothing edges are not known. However, note
that the lower bound holds (for appropriate choice of $k$ in terms of $\epsilon$).

\section{Conclusion}

In this paper, we study smoothed analysis of distributed graph algorithms focusing on the well-studied distributed MST problem. 
Our work can be considered as a first step in understanding the smoothed complexity of distributed graph algorithms.

We present a smoothing model, and upper and lower bounds for the time complexity of distributed MST in this model. These bounds quantify the time bounds in terms of the smoothing parameter $\epsilon$. The bounds are within a factor of $2^{O(\sqrt{\log n})} \polylog n$ and a key open problem is whether this gap can be closed.

While we focus on one specific smoothing model, our results also apply to other related smoothing models (discussed in Section \ref{section-other-smoothing-models}). A commonality among these models, besides adding random edges, is that the added edges are \emph{known} to the nodes. This knowledge of the random edges are crucial to obtaining our upper bounds. Of course, our lower bounds apply \emph{regardless of} this knowledge.

An important open problem is to show non-trivial bounds when the random edges are {\em unknown} to the nodes; i.e., the input graph consists of the original graph $G$ plus the random edges and the nodes cannot distinguish  between edges in $G$ and the added random edges. 

It would also be interesting to explore other fundamental distributed graph problems such as leader election, shortest paths, minimum cut etc., in the smoothing model.

\bibliographystyle{plain}
\bibliography{references_MST_smoothed_analysis}

\clearpage
\appendix

\section{Preliminaries}

\subsection{Volume and Graph Conductance}

For any graph $G = (V, E)$ and any node $v \in V$, the degree of $v$ is denoted by $d(v)$. For any nonempty set $S \subseteq V$, we define the following terms.

\begin{definition} \label{definition-volume-of-a-set}
    The \emph{volume} of $S$ is defined as the sum of the degrees of nodes in $S$, i.e., the volume of $S$, $\Vol(S)  \defeq  \sum_{v \in S} d(v)$.
\end{definition}

\begin{definition} \label{definition-boundary-of-a-set}
    For $S  \subsetneq  V$, the \emph{boundary} of $S$, denoted by $\partial{S}$, is defined as the set of edges crossing $S$ and $\bar{S}$, where $\bar{S}  \defeq  V \setminus S$.
\end{definition}

\begin{definition} \label{definition-conductance-of-a-set}
    For $S  \subsetneq  V$, the \emph{conductance} of $S$ is defined as $\varphi(S)  \defeq  \frac{\SizeOfSet{\partial{S}}}{\min(\Vol(S), \Vol(\bar{S}))}$.
\end{definition}

\begin{definition} \label{definition-conductance-of-the-whole-graph}
    The \emph{conductance} of $G$ is defined as
    \begin{equation}
        \Phi(G)  \defeq  \min \Set{\varphi(S)  |  S \subsetneq V}
    \end{equation}
\end{definition}

\begin{remark}
    An alternative, but equivalent, definition of $\Phi(G)$ would be the following:
    \begin{equation} \label{equation-alternative-definition-conductance-of-the-whole-graph}
        \Phi(G)  \defeq  \min \Set{\frac{\SizeOfSet{\partial{S}}}{\Vol(S)}  |  S \subsetneq V \text{ and } \Vol(S) \leq \frac{\Vol(V)}{2}}
    \end{equation}
\end{remark}

\begin{remark}
Another equivalent definition of $\Phi(G)$, which is more handy in some cases:
\begin{equation}
\label{equation:conductance-handy}
    \Phi(G)  \defeq 
        \min \Set{
            \frac{\SizeOfSet{\partial{S}}}{min(\Vol(S),\Vol(\bar{S}))}  |  S \subsetneq V \text{ and } |S| \leq \frac{|V|}{2}}
\end{equation}
\end{remark}
\subsection{Random Walks and Mixing Time}

Our MST algorithms (see Section \ref{section-constructing-an-MST} and Section \ref{section-improved-algorithm}) use Ghaffari et al.'s works (see \cite{Ghaffari_2017_PODC, Ghaffari_2018_DISC}) to perform \emph{routing}. They use \emph{random walks} extensively. We follow their definition of \emph{mixing time} and state it here for the sake of completeness. Note that the usual definitions most commonly used in the literature are \emph{conceptually} the same.

\paragraph{Random Walks.} Random walks have turned out to be a fundamental technique that continues to find applications in all branches of theoretical computer science, and in particular, distributed computing (see, e.g., \cite{Sarma_2013, Sarma_2015}). In order to guarantee that the resulting Markov chain is aperiodic, Ghaffari et al.\ uses \emph{lazy random walks}: In every step, the walk remains at the current node with probability $\frac{1}{2}$ and it transitions to a uniformly random neighbor otherwise. The \emph{stationary distribution} of such a random walk is proportional to the degree distribution, i.e., when performing enough steps of the random walk, the probability for ending in a node v converges to $\frac{d(v)}{2m}$.

\begin{definition}[Mixing Time] \label{definition-mixing-time}
    For $V = \Set{v_1, v_2, \ldots, v_n}$ and a node $v \in V$, let $P_v^t  =  (P_v^t(v_1), P_v^t(v_2), \ldots, P_v^t(v_n))$ be the probability distribution on the nodes after $t$ steps of a lazy random walk starting at $v$. Then the \emph{mixing time} (denoted by $\tau_{\text{mix}}(G)$) of the graph $G$ is defined as the minimum $t$ such that $\forall\ u, v \in V$,
    \begin{center}
        $\abs{P_v^t(u) - \frac{d(v)}{2m}}  \leq   \frac{d(v)}{2mn}$.
    \end{center}
\end{definition}

We note that --- as in \cite{Ghaffari_2017_PODC} --- by running a random walk for $O(\tau_{\text{mix}})$ steps, one can improve the deviation from mixing time to $\abs{P_v^t(u) - \frac{d(v)}{2m}}  \leq   \frac{1}{n^c}$, for any arbitrary (but fixed) positive constant $c$.
\subsection{Chernoff Bounds}

Here we state the most commonly used version of the Chernoff bound \cite{Mitzenmacher_2017_Book}: for the \emph{tail distribution} of a sum of independent, $0-1$ random variables. Let $X_1, X_2, \ldots, X_n$ be a sequence of independent, $0-1$ random variables with $\Pr[X_i = 1]  =  p_i$. Let $X  =  \sum_{i = 1}^n X_i$, and let

\begin{align*}
    &\mu  =  \E[X]  =  \E[\sum_{i = 1}^n X_i]  =  \sum_{i = 1}^n \E[X_i]  =  \sum_{i = 1}^n p_i\text{.}
\end{align*}

For a given $\delta > 0$, we are interested in bounds on $\Pr[X  \geq  (1 + \delta)\mu]$ and $\Pr[X  \leq  (1 - \delta)\mu]$ --- that is, the probability that $X$ deviates from its expectation $\mu$ by $\delta \mu$ or more. Then one can show that (see \cite[Chapter 4]{Mitzenmacher_2017_Book} for the proof of the theorem and a more detailed discussion):

\begin{theorem}[Chernoff bounds] \label{theorem-Chernoff-bounds}
    Let $X_1, X_2, \ldots, X_n$ be a sequence of independent, $0-1$ random variables such that $\Pr[X_i = 1]  =  p_i$. Let $X  =  \sum_{i = 1}^n X_i$ and let $\mu  =  \E[X]$. Then the following Chernoff bounds hold:

    \begin{enumerate}
        \item \label{clause-inequality-Chernoff-bound-upper-tail} For any $\delta  \in  (0, 1]$,
        \begin{center}
            $\Pr[X  \geq  (1 + \delta)\mu]   \leq   \exp(-\frac{\mu \delta^2}{3})$;
        \end{center}

        and

        \item \label{clause-inequality-Chernoff-bound-lower-tail} For any $\delta  \in (0, 1)$,
        \begin{center}
            $\Pr[X  \leq  (1 - \delta)\mu]   \leq   \exp(-\frac{\mu \delta^2}{2})$.
        \end{center}
    \end{enumerate}

\end{theorem}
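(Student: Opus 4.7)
The plan is to establish both tail bounds via the classical moment generating function (MGF) method --- i.e., the Bernstein/Chernoff technique: apply Markov's inequality to the exponentiated random variable $e^{tX}$ for a parameter $t > 0$ that will be optimized at the end, then exploit independence of the $X_i$ to factor the MGF into a product.

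For clause~\ref{clause-inequality-Chernoff-bound-upper-tail}, I would fix $t > 0$ and start from
$$
    \Pr[X \geq (1+\delta)\mu] \;=\; \Pr\!\left[e^{tX} \geq e^{t(1+\delta)\mu}\right] \;\leq\; \frac{\E[e^{tX}]}{e^{t(1+\delta)\mu}}.
$$
By independence $\E[e^{tX}] = \prod_{i=1}^n \E[e^{tX_i}]$, and for each Bernoulli $X_i$ the MGF is $\E[e^{tX_i}] = 1 + p_i(e^t - 1) \leq \exp(p_i(e^t - 1))$, using the elementary inequality $1 + x \leq e^x$. Taking the product across $i$ and summing the $p_i$ to $\mu$ yields $\E[e^{tX}] \leq \exp(\mu(e^t - 1))$. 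I would then optimize by setting $t = \ln(1+\delta) > 0$, which collapses the bound to
$$
    \Pr[X \geq (1+\delta)\mu] \;\leq\; \left(\frac{e^{\delta}}{(1+\delta)^{1+\delta}}\right)^{\mu}.
$$
To match the stated form, I would invoke the numerical inequality $(1+\delta)\ln(1+\delta) - \delta \geq \delta^2/3$, valid on $\delta \in (0,1]$ (checkable by Taylor expansion of $\ln(1+\delta)$ and monotonicity of the residual), which immediately yields $\Pr[X \geq (1+\delta)\mu] \leq e^{-\mu\delta^2/3}$.

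For clause~\ref{clause-inequality-Chernoff-bound-lower-tail}, the plan is symmetric. I would apply Markov to $e^{-tX}$ with $t > 0$, factor using independence, and bound $\E[e^{-tX_i}] = 1 + p_i(e^{-t} - 1) \leq \exp(p_i(e^{-t} - 1))$ to obtain $\E[e^{-tX}] \leq \exp(\mu(e^{-t} - 1))$. Setting $t = -\ln(1-\delta)$ then collapses the bound to $\Pr[X \leq (1-\delta)\mu] \leq \bigl(e^{-\delta}/(1-\delta)^{1-\delta}\bigr)^{\mu}$, and the sharper numerical inequality $(1-\delta)\ln(1-\delta) + \delta \geq \delta^2/2$ on $\delta \in (0,1)$ delivers the target $e^{-\mu\delta^2/2}$. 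The improved constant $1/2$ (as opposed to $1/3$ in the upper tail) is intrinsic to the proof: it reflects the asymmetry between $e^{t}$ and $e^{-t}$ near the origin, which makes the Taylor remainder in the lower-tail direction tighter.

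There is no real obstacle: this is a textbook derivation, and indeed the paper itself defers the proof to \cite{Mitzenmacher_2017_Book}. The only mildly technical step is verifying the two numerical inequalities on their respective intervals, each a routine calculus exercise done either by Taylor-expanding $\ln(1\pm\delta)$ to third order and tracking sign, or by differentiating once and verifying monotonicity. If one wanted a fully self-contained write-up, the cleanest organization would be to prove a single unified MGF bound $\E[e^{tX}] \leq e^{\mu(e^t-1)}$ once, and then perform the two optimizations for $t > 0$ and $t < 0$ back-to-back, so that the only proof-specific content is the two scalar inequalities at the very end.
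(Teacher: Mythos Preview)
Your proposal is correct and is exactly the standard MGF derivation; the paper itself does not give a proof but merely cites \cite[Chapter~4]{Mitzenmacher_2017_Book}, whose argument is precisely the one you outline. There is nothing to add.
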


\begin{corollary}[Chernoff bound]
    Let $X_1, X_2, \ldots, X_n$ be a sequence of independent, $0-1$ random variables such that $\Pr[X_i = 1]  =  p_i$. Let $X  =  \sum_{i = 1}^n X_i$ and let $\mu  =  \E[X]$. Then, for $0 < \delta < 1$,
    \begin{center}
        $\Pr[\abs{X - \mu}   \geq   \delta\mu]     \leq     2\exp(-\frac{\mu \delta^2}{3})$.
    \end{center}
\end{corollary}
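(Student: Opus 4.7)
The plan is to derive the two-sided bound directly from the two one-sided Chernoff bounds already established in Theorem~\ref{theorem-Chernoff-bounds}. First I would observe that the event $\{|X - \mu| \geq \delta\mu\}$ decomposes into the disjoint union of the upper-tail event $\{X \geq (1+\delta)\mu\}$ and the lower-tail event $\{X \leq (1-\delta)\mu\}$. By a straightforward union bound (actually just additivity, since the two events are disjoint), the probability of the combined event is the sum of the two individual probabilities.

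Next I would bound each of the two summands using the corresponding clause of Theorem~\ref{theorem-Chernoff-bounds}. Clause~\ref{clause-inequality-Chernoff-bound-upper-tail} gives $\Pr[X \geq (1+\delta)\mu] \leq \exp(-\mu\delta^2/3)$, which is exactly in the form we want. Clause~\ref{clause-inequality-Chernoff-bound-lower-tail} gives $\Pr[X \leq (1-\delta)\mu] \leq \exp(-\mu\delta^2/2)$. Since $\delta \in (0,1)$ and $\mu \geq 0$, we have $\exp(-\mu\delta^2/2) \leq \exp(-\mu\delta^2/3)$, because $-\mu\delta^2/2 \leq -\mu\delta^2/3$. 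Adding the two upper bounds yields $2\exp(-\mu\delta^2/3)$, which is the claimed bound.

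I should note one small technicality: Theorem~\ref{theorem-Chernoff-bounds}'s lower-tail clause is stated for $\delta \in (0,1)$ (open on both sides) while the upper-tail clause is stated for $\delta \in (0,1]$; the corollary's hypothesis $0 < \delta < 1$ falls inside both ranges, so both inequalities apply as stated. There is no real obstacle here: the proof is essentially a one-line deduction from the already-proven theorem, and I do not anticipate any step being difficult. The only minor care needed is verifying the weakening of the lower-tail exponent from $1/2$ to $1/3$, which is trivial since the exponent is non-positive.
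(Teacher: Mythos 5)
Your proof is correct and is exactly the standard derivation the paper implicitly relies on (the paper states the corollary as an immediate consequence of Theorem \ref{theorem-Chernoff-bounds}, citing \cite{Mitzenmacher_2017_Book} rather than writing out the argument): decompose the two-sided event into the two tails, apply each clause, and weaken the lower-tail exponent from $\frac{1}{2}$ to $\frac{1}{3}$ before summing. No gaps; the union bound does not even require the disjointness you mention, so the degenerate case $\mu = 0$ causes no trouble either.
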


\begin{remark}
    In practice, we often do not have the exact value of $\E[X]$. Instead we can use $\mu  \geq  \E[X]$ in Clause \ref{clause-inequality-Chernoff-bound-upper-tail} and $\mu  \leq  \E[X]$ in Clause \ref{clause-inequality-Chernoff-bound-lower-tail} of Theorem \ref{theorem-Chernoff-bounds}.
\end{remark}

\section{Alternative Proof for Lemma \ref{lemma:conductance}} \label{section-proof-of-the-expansion-and-conductance-lemma}

\begin{proof}
  We will show that, there exists constants $c > 0$ and $0 < \alpha, \beta < 1$, such that:
  $\Phi(R(G)) \geq \frac{c(1-\beta)(1-\alpha)}{4(1+\alpha)}$.

  Let the selection run for $c \frac{\log n}{\epsilon}$ rounds. For a node $u$, let $r(u)$ indicate
  the number of added edges by $u$. We have $\E[r(u)] = c \log n$. We will show that, with high probability:
  \begin{equation} \label{eq:bound-ru}
    \forall u \in V, (1-\alpha)c \log n \leq r(u) \leq (1+\alpha)c \log n
  \end{equation}

  Let $E_1$ be the bad event, that at least one  node added too few or too many edges
  outside the above range. Using a standard Chernoff bound to bound the probability that, for one particular node
  $u$, $r(u)$ falls outside the above range and then
  union-bounding over $n$ nodes, we have:
  \begin{align}
    \Pr[E_1] &\leq n\left( \Pr[r(u) \leq (1-\alpha)c\log n] +
      \Pr[r(u) \geq (1+\alpha)c\log n] \right) \nonumber \\
    &\leq n \left( e^{-\frac{\alpha^2 c\log n}{3}} + e^{-\frac{\alpha^2 c\log n}{2}} \right)
    \nonumber \\
    &\leq exp \left( \ln n + \ln 2 - \frac{\alpha^2 c\log n}{3} \right) = o(n^{-1}),
  \end{align}
  for  suitable constants $\alpha$ and $c$.
  
  We note that $R(G)$ is essentially equivalent to
  the Erdos-Renyi random graph $G(n,p)$ where $p = \Theta(\log n/n)$ is the probability of having an edge between any pair of nodes. Note that, however, there are slight differences ---  $R(G)$ is a multi-graph unlike $G(n,p)$. Since $G(n,p)$ with $p=\Theta(\log n/n)$ is known to be an expander and hence has constant conductance and $O(\log n)$ mixing time, here we give a self-contained proof for the conductance 
  
  Now we bound the conductance $\Phi(R)$ of the graph $R(G)$, using the definition in equation \ref{equation:conductance-handy}:
  \begin{equation}
    \Phi(R(G)) =  \min_{S \subset V, |S| \leq \frac{n}{2}}
    \frac{|\partial S|}{\min(\Vol(S), \Vol(\bar S))}
  \end{equation}
  Fix a subset $S$ of $V$ where $|S| = k \leq \frac{n}{2}$, and bound the conductance $\varphi(S)$:
    \begin{equation}
    \varphi(S) = \frac{|\partial S|}{\min(\Vol(S), \Vol(\bar S))}
  \end{equation}
 
  We consider the numerator and the denominator of $\varphi(S)$ separately. For the denominator, we have: $\Vol(S)
  \leq 2k(1+\alpha)c\log n$ and $\Vol(\bar S) \leq 2(n-k)(1+\alpha)c\log n$. Thus:
  \begin{equation}
    \min(\Vol(S), \Vol(\bar S)) \leq 2k(1+\alpha)c\log n \label{eq:bound-denom}
  \end{equation}
  Consider the numerator, and recall that each node $u$ has $r(u)$ edges where the destination of each edge is
  chosen uniformly at random.  Using the bound in equation \ref{eq:bound-ru}, we have:
  \begin{align}
    E[|\partial S|] &\geq \min \left(
      k(1-\alpha)c\log n \cdot (\frac{n-k}{n}),
      (n-k)(1-\alpha)c\log n \cdot (\frac{k}{n})
    \right) \nonumber \\
    &= \frac{n-k}{n}k(1-\alpha)c \log n
  \end{align}
  Since $k \leq \frac{n}{2}$, we have:
  \begin{equation}
    E[|\partial S|] \geq \frac{k}{2}(1-\alpha)c \log n
  \end{equation}

  Let $\mu_{S} = E[|\partial S|]$. Next, similar to the steps for equation \ref{eq:bound-ru},
  we will bound $|\partial S|$. For some constant $0 < \beta < 1$, we need to show that, with high
  probability:
  \begin{equation}
    |\partial S| \geq (1 - \beta) \mu_S \label{eq:bound-numer}
  \end{equation}
  Chernoff bound gives:
  \begin{equation}
    \Pr[|\partial S| < (1 - \beta) \mu_S] \leq e^{-\frac{\beta^2 \mu_S}{2}}
  \end{equation}
  Let $\mathcal{E}^k_2$ be the bad event that at least one of $\binom{n}{k}$ set has less than
  $(1-\beta)\mu_S$ boundary edges. Using the fact (and $k\geq 1$):
  \[
    \left(\frac{n}{k} \right)^k \leq {\binom{n}{k}} \leq \left( \frac{ne}{k} \right)^k
    \leq (e^{k \ln n}), 
  \]
  and union bound gives:
  \begin{align}
    \Pr[\mathcal{E}^k_2] \leq exp(-\frac{\beta^2 \mu_S}{2} + k\ln n)
    \leq \exp(-\frac{\beta^2 k}{4} (1-\alpha)c \log n + k \ln n)
  \end{align}
  We can choose the constants $c, \alpha, \beta$ such that:
  \begin{equation}
    \Pr[\mathcal{E}^k_2] \leq e^{-2\ln n}
  \end{equation}
  Let $\mathcal{E}_3$ be the bad event that at least one bad event $\mathcal{E}^k_2$ happens, for all
  $1 \leq k \leq \frac{n}{2}$. Using union bound:
  \begin{align}
    \Pr[\mathcal{E}_3] &\leq \sum_{k=1}^{n/2} \Pr[\mathcal{E}^k_2]
    \leq e^{\ln n - \ln 2} e^{-2 \ln n} = o(n^{-1})
  \end{align}
  Thus, from equations \ref{eq:bound-denom} and \ref{eq:bound-numer}, for
  all $S$ where $|S| \leq \frac{n}{2}$, we have, with high probability (for suitably chosen constants $c, \alpha, \beta$):
  \begin{equation}
    \varphi(S) \geq \frac{(1-\beta)\frac{k}{2}(1-\alpha)c \log n}
      {2k(1+\alpha)c \log n}
    = \frac{c(1-\beta)(1-\alpha)}{4(1+\alpha)}
  \end{equation}
  It follows that $\Phi(R(G))$ is constant.

  Since $R(G)$ has constant conductance, it follows that (see, e.g., \cite{Jerrum_1988, Lovasz_1999_STOC}) $R(G)$ has fast mixing time: $\tau_{mix}(R)=O(\log{n})$.
\end{proof}

\end{document}